\newcommand{\ext}[1]{\mathbf{d}\ensuremath{\mathbf{#1}}}
\providecommand{\abs}[1]{\vert#1\vert}
\newtheorem{thm}{Theorem}[section]
\newtheorem{dfn}{Definition}
\def\beq{\begin{eqnarray}}
\def\eeq{\end{eqnarray}}
\newcommand\Tstrut{\rule{0pt}{2.6ex}}         
\begin{document}

\title[]{Bianchi cosmologies with $p$-form gauge fields}

\author{Ben David Normann$^1$, Sigbj\o rn Hervik$^1$, Angelo Ricciardone$^1$, Mikjel Thorsrud$^2$}


\address{$^1$Faculty of Science and Technology, University of Stavanger, 4036, Stavanger, Norway\\
$^2$Faculty of Engineering, \O stfold University College, 1757 Halden, Norway}
\ead{ben.d.normann@uis.no, sigbjorn.hervik@uis.no, angelo.ricciardone@uis.no, mikjel.thorsrud@hiof.no}
\vspace{10pt}
\begin{indented}
\item[]December 2017
\end{indented}

\begin{abstract}
In this paper the dynamics of free gauge fields in Bianchi type I-VII$_{h}$ space-times is investigated.  The general equations for a matter sector consisting of a $p$-form field strength ($p\,\in\,\{1,3\}$), a cosmological constant ($4$-form) and perfect fluid in Bianchi type I-VII$_{h}$ space-times  are computed using the orthonormal frame method. The number of independent components of a $p$-form in all  Bianchi types I-IX are derived and, by means of the dynamical systems approach, the behaviour of such fields in Bianchi type I and V are studied. Both a local and a global analysis are performed and strong global results regarding the general behaviour are obtained. New self-similar cosmological solutions appear both in Bianchi type I and Bianchi type V, in particular, a one-parameter family of self-similar solutions,``Wonderland ($\lambda$)'' appears generally in type V and in  type I for $\lambda=0$. Depending on the value of the equation of state parameter other new stable solutions are also found  (``The Rope'' and ``The Edge'') containing a purely spatial field strength that rotates relative to the co-moving inertial tetrad. Using monotone functions, global results are given and the conditions under which exact solutions are (global) attractors are found.
\end{abstract}
%
\vspace{2pc}
\noindent{\it Keywords}: $p$-form gauge fields, anisotropic space-times, Bianchi models, inflation, dynamical system, orthonormal frame.
%
%
%
%

\section{Introduction}
Cosmological observations suggest that the universe is homogeneous and isotropic on large scales and the $\Lambda$CDM model seems to describe such a scenario with high accuracy. This has been confirmed by the~\textit{Planck} satellite, which measured a level of deviation from isotropy quite compatible with zero~\cite{Ade:2013nlj,Ade:2015hxq,Saadeh:2016sak}. This observational evidence seems to be in accordance with a scalar-driven inflationary epoch in which a scalar field, the inflaton, drives a quasi-de Sitter exponential phase of expansion. \\
However, the evidence of some unexpected features, called ``anomalies'', in the Cosmic Microwave Background (CMB), previously observed in the WMAP data~\cite{Bennett:2010jb}, and partly also confirmed by~\textit{Planck} data~\cite{Ade:2013nlj,Ade:2015hxq}, seems to suggest a possible deviation from isotropy and/or homogeneity at some point in the evolution of the universe. There are various anomalies that have been observed and their nature, cosmological or systematic, is still under debate~\cite{Ade:2013nlj}. Among them, the mutual alignment of the lowest multipole moments, the hemispherical asymmetry (between the total power on the North and South ecliptic hemispheres), and the dipole modulation of the CMB signal on very large scales, have been confirmed with a quite high level of significance. An appealing explanation, from the cosmological point of view, is a possible violation of the isotropy during the evolution of the universe. Considering a scenario based on a homogeneous scalar field, the invariance under spatial rotations remains unbroken, so, to violate the isotropy, it is necessary to modify the matter content of the primordial universe by introducing new field(s). 
Motivated by these observations, on one side theoretical models that can generate and sustain an anisotropic phase of expansion have gained attention and, on the other side, a deeper analysis of anisotropic space-times becomes necessary. Usually vector fields are employed in theoretical models to support an anisotropic evolution~\cite{Ford:1989me, Ackerman:2007nb,Golovnev:2008cf,Kanno:2008gn, Maleknejad:2011jw,Himmetoglu:2008zp, EspositoFarese:2009aj, Hervik:2011xm,Thorsrud:2012mu, Barrow:2005qv}. Recently a triplet of spin zero fields with a spatially-dependent vacuum expectation value~\cite{Endlich:2012pz, Bartolo:2013msa, Bartolo:2014xfa} have been investigated.

A more general understanding of the primordial sources that can generate an anisotropic evolution can be obtained by introducing $p$-form gauge fields in anisotropic space-times (Bianchi models). The case of a $2$-form field strength has already largely been subject of investigation in the literature. The possible existence of an homogeneous intergalactic magnetic field has led to investigation of source-free electromagnetism in the Bianchi models, see for instance~\cite{LeBlanc:1997ye} and references therein. Also, more recently, a general study of source-free Maxwell equations in Bianchi class B space-times has been conducted~\cite{Yamamoto:2011xy,Barrow:2011ct}. 

In the present paper, therefore, starting from a general $p$-form action, emphasis is put on the so far unstudied cases in a cosmological context: the $1$-form and $3$-form in anisotropic space-times. As explicitly shown, there is an equivalence between these two cases at the field strength level. Collectively they are henceforth referred to as a $j$-form (id est: $j=1$ or $j=3$).  Because of this equivalence the present study can be viewed as a general study of a massless \emph{inhomogeneous} scalar gauge field with a homogeneous gradient, in Bianchi space-times of type I-VII$_h$.  In fact this is a previously largely unexplored branch in the vast literature on cosmological scalar fields \cite{Fadragas:2013ina}. Previous investigations are primarily dealing with the possibility of realizing shear-free anisotropic cosmological models \cite{Carneiro:2001fz,Koivisto:2010dr,thorsrud17}. Beyond that a scalar field without a potential may appear to be of very limited relevance in cosmological model building, and has to the best of our knowledge been ignored. In this work the general evolution of cosmological models containing a $j$-form is studied for the first time. 

In cosmological model building containing an isotropy violating matter sector (see references above) the attention in the past has primarily been on \emph{locally rotationally symmetric} (LRS) models. In such models the space-time has an additional Killing vector representing invariance under spatial rotation. In LRS models the isotropy violating matter field, typically some kind of ``vector'', needs to be aligned with the Killing field, by assumption. It as been believed, perhaps, that such models are stable with respect to non-LRS type homogeneous perturbations. After all, a vector is intrinsically rotationally symmetric and one may assume it cannot source a more general non-LRS evolution of shear. This ignores the possibility that the vector may rotate. In the analysis presented here, state space is kept fully general from the start, within the respective Bianchi models. A  $\gamma$-law perfect fluid is included in the matter sector together with the $j$-form. In the case of Bianchi type I both a LRS (``Wonderland'') and non-LRS (``the Rope'' and ``the Edge'') self-similar solutions are found. It is shown that in the physically relevant parameter region $6/5<\gamma \le 2$, there is a unique non-LRS solution for each value of $\gamma$ that is stable, in fact a global attractor. 
Note that this range includes the radiation case $\gamma = 4/3$, in which, remarkably, the deceleration parameter $q=1$ is identical to the corresponding flat FLRW solution.\setcounter{footnote}{0}\footnote{In fact, for $\gamma\in(2/3, 4/3]$ all anisotropic attractors has a deceleration parameter $q=-1+\frac{3}{2}\gamma$ identical to the corresponding flat FLRW solution, see table \ref{tab:FpB1}.} 

Moreover, the Rope and the Edge contain a purely spatial field strength that rotates relative to the comoving inertial tetrad. Thus a phenomenon of ``vector rotation'' is present and in fact stable. In invariant LRS subspaces it turns out that Wonderland is the global attractor for the same $\gamma$ range. This demonstrates how essential it is to keep the analysis fully general, in order to identify the ``true'' dynamically preferred solution. To the best of our knowledge, this phenomenon of ``vector rotation'' at the background level is a qualitatively new feature of cosmological dynamical systems. To connect this explicitly to observational cosmology is not the scope of this paper. However, note that observational signatures of LRS vector models are quite limited, typically imprints in observables are themselves axisymmetric, often limited to quadrupole type modulations of CMB \cite{Copi:2013jna, Kim:2013gka, Naruko:2014bxa, Ramazanov:2016gjl, Bartolo:2012sd}.

One of the first new results of the current work is to write down the equations of motion and field equations for a system composed by a $j$-form, a  $\gamma$-law perfect fluid and a 4-form (cosmological constant) in Bianchi space-times of type I-VII$_h$. 

In the presence of a positive cosmological constant, cosmic no-hair theorems are formulated for the case of a $j$-form field strength present in the equations. The theorems imply that in the presence of a cosmological constant, or a $\gamma$-law barotropic fluid with $0<\gamma<2/3$, the universe asymptotically approaches the de Sitter universe, or a power-law inflating FLRW universe (quasi-de Sitter). 

Thereafter attention is devoted to analysing the general dynamics in Bianchi type I and V.
The no-hair theorems determine the late-time behavior of both the Bianchi type I and type V model, with $\Omega_{\Lambda}=0$ and $\gamma<2/3$.
For $\gamma\in(2/3,2)$, the complex dynamics of the system is tackled by use of a dynamical systems approach. The strategy then becomes that of finding equilibrium points and assessing their stability. Such analysis naturally allows for investigating solutions both close to and far away from isotropic solutions. A strong advantage in such an analysis,  therefore, is that one does not need the assumption that the universe has always been close to a Friedman-Lemaitre-Robertson-Walker (FLRW) background, as it is observed to be today. In addition to this, the dynamical systems approach enables a whole machinery to be implemented. Indeed, numerous monotone functions are found and used to find the global behaviour of solutions. Thus, the generality of such behaviour becomes apparent for each of the models considered. 

The method is implemented by writing down the system of equations using the orthonormal frame formalism. This is done for several reasons: first of all, the resulting equations are first order (partial differential equations); secondly, the physical meaning of the variables becomes more transparent. Two different gauge choices have been employed in the analysis, showing that $F-gauge$ is particularly suitable for the Bianchi type V system, whereas the $\Sigma_{3}$-gauge suits the analysis of the Bianchi type I system.

In the case of stiff matter ($\gamma=2$) there exists a set of solutions, the {\it Jacobs' Sphere}.
In the limit where the energy density of the perfect fluid is zero $(\Omega_{\rm pf}=0)$ it reduces to a {\it Jacobs' Extended Disk} solution where $\gamma\in[0,2)$. Simulations and analytics show that the  {\it Jacobs' Extended Disk} solution acts as a past attractor, as in the perfect fluid case.  As for late times, the Bianchi I solutions containing the $j$-form has already been mentioned. New one-parameter families of solutions are found in Bianchi type V: ``Wonderland'' ($\mathcal{P}_{\rm W}(\lambda)$) (where $\lambda=0$ is the Bianchi I particular case), which is an attractor in the range $2/3\leq\gamma<2$, with $\lambda$ restricted to $0\,\leq\,\lambda\,\leq\,\lambda_{\rm sup}\equiv\frac{\sqrt{3}}{4}\sqrt{2-\gamma }$, and ``Plane Waves'' ($\mathcal{P}_{\rm P. W.}(\Sigma_{+})$), with $\Sigma_{+}\in (-1,0]$, which are stable for $\gamma> 2/3-4/3\,\Sigma_{+}$. Indeed, using monotone functions, corresponding anisotropic hair theorems are given for Bianchi types I and V when $\Omega_{\lambda}=0$ and $2/3<\gamma<2$. These results show that generally the space-time has anisotropic hairs and the $j$-form field is dynamically significant during the evolution.  \\

The structure of the paper is as follows: in Section \ref{sec:pdyn} the dynamics of  a $p$-form is discussed in general. In Section \ref{sec:pclass} the classification of possible cases of interest in cosmology is explored.  In Section \ref{sec:formandbianchi}, an introduction to a further study of the $j$-form in the Bianchi models is made. Choice of frame is discussed and expansion normalized complex variables are introduced. A table of the number of independent components of the $j$-form in each particular Bianchi type is also given. In Section \ref{sec:systeq} the general evolution equations in an orthonormal frame are explicitly written down as a set of ordinary differential equations. Section \ref{sec:nohair} extends cosmic no-hair theorems to the case where differential forms are embedded in the anisotropic space-time. In Section \ref{sec:eqpoints} a definition of Equilibrium Points is provided, alongside certain gauge definitions. In Sections \ref{sec:dyn1} and \ref{sec:dyn5} the dynamical system is specialized to Bianchi type I and Bianchi type V, respectively. Dynamics and stabilities are assessed and described in detail both analytically and numerically.

To avoid confusion with differing conventions, in the rest of the paper $0,1,2,3$ and $4$-forms are labeled with the letters $\phi,\mathcal{A},\mathcal{B}, \mathcal{C}$ and $\mathcal{D}$, respectively. The forms are referenced at field strength level, and not at the underlying gauge potential level.

\section{$p$-form dynamics}
\label{sec:pdyn}
\subsection{The general $p$-form action}
Since the spaces considered (Bianchi models) are anisotropic, it is natural to consider an anisotropic matter sourcing. A natural candidate for such a source is that stemming from the general $p$-form action \cite{thorsrud17}

\begin{equation}
\label{action}
S_{\rm f}=-\frac{1}{2}\int{\mathcal{P}}\wedge\star\mathbf{\mathcal{P}}\,,
\end{equation}
 where $\mathbf{\mathcal{P}}$ is a $p$ -form constructed by the exterior derivative of a ($p-1$)-form $\mathcal{K}$. That is, 
\begin{equation}
\label{pForm}\fl
\mathbf{\mathcal{P}}=\ext{\mathcal{K}}=\frac{1}{(p-1)!}\nabla_{\mu_1}\mathcal{K}_{\mu_2\cdots\mu_{p}}\mathbf{\omega}^{\mu_1}\wedge\cdots\wedge\mathbf{\omega}^{\mu_{p}}=\frac{1}{p!}\mathcal{P}_{\mu_1\cdots\mu_{p}}\mathbf{\omega}^{\mu_1}\wedge\cdots\wedge\mathbf{\omega}^{\mu_{p}}\,.
\end{equation}
The Hodge dual is given by
\begin{equation}
\label{pFHodge}\fl
\star\mathbf{\mathcal{P}}=\frac{1}{p!(n-p)!}\eta_{\mu_1\cdots\mu_p\nu_1\cdots\nu_{n-p}}\mathcal{P}^{\mu_1\cdots\mu_p}\mathbf{\omega}^{\nu_1}\wedge\cdots\wedge\mathbf{\omega}^{\nu_{n-p}}=\frac{1}{(n-p)!}\ast \mathcal{P}_{\mu_1\cdots\mu_{n-p}}\mathbf{\omega}^{\mu_1}\wedge\cdots\wedge\mathbf{\omega}^{\mu_{n-p}},
\end{equation}
where $n$ is the dimension of the space and $\eta_{\mu_1\cdots\mu_n}=\sqrt{-g}\,\varepsilon_{\mu_1\cdots\mu_n}$. Here $g$ is the metric determinant and $\varepsilon_{\mu_1\cdots\mu_n}$ the standard anti-symmetric symbol of rank $n$. From \eref{pForm} and \eref{pFHodge} the explicit expressions for the $\mathbf{\mathcal{P}}$ components and the $\star\mathbf{\mathcal{P}}$ components are 
\begin{equation}
\label{DComp}\fl
\mathcal{P}_{\mu_1\cdots\mu_p}=p\nabla_{[\mu_1}K_{\mu_2\cdots\mu_{p}]}\quad\quad\textrm{and}\quad\quad \ast\mathcal{P}_{\nu_1\cdots\nu_{n-p}}=\frac{1}{p!}\eta_{\mu_1\cdots\mu_p\nu_1\cdots\nu_{n-p}}\mathcal{P}^{\mu_1\cdots\mu_p}.
\end{equation}
From this one obtains the action on component form
\begin{equation}
\label{Scomp}
S=-\frac{1}{2p!}\int\sqrt{-g}\,{\rm d}^4x\,\mathcal{P}^{\mu_1\cdots\mu_p}\mathcal{P}_{\mu_1\cdots\mu_p}.
\end{equation}
The energy-momentum tensor is
\begin{equation}\fl
\label{EnMomTens}
T_{\alpha\beta}\,\equiv\,-\frac{2}{\sqrt{-g}}\frac{\delta\left(\sqrt{-g}\mathcal{L}\right)}{\delta g^{\alpha\beta}}=\frac{1}{p!}\left[p\,\tensor{\mathcal{P}}{_\alpha^{\mu_2\cdots\mu_p}}\tensor{\mathcal{P}}{_{\beta\mu_2\cdots\mu_p}}-\frac{1}{2}g_{\alpha\beta}\tensor{\mathcal{P}}{^{\mu_1\cdots\mu_p}}\tensor{\mathcal{P}}{_{\mu_1\cdots\mu_p}}\right],
\end{equation}
 where $\mathcal{L}=-\frac{1}{2p!}\mathcal{P}^{\mu_1\cdots\mu_p}\mathcal{P}_{\mu_1\cdots\mu_p}$ is the Lagrangian density.
Now, equations of motion can be obtained by varying $\mathcal{L}$ with respect to the gauge field $\mathcal{K}$. One finds
\begin{equation}
\label{GenEoM}
\nabla_{\alpha_1}\mathcal{P}^{\alpha_1\cdots\alpha_p}=0\,.
\end{equation}
Energy conservation (contracted Bianchi identity) is now given by
\begin{equation}
\label{GenBianc}
\tensor{T}{^{\mu\nu}_{;\nu}}=0\,.
\end{equation}

\subsection{Exterior calculus}
The equations of motion \eref{GenEoM} and the Bianchi identity \eref{GenBianc}, both obtained from the action \eref{action}, may also be given in the language of exterior calculus; namely, 
\begin{equation}\fl
\label{dP}
\ext{\mathcal{P}}=0\quad\quad\rightarrow\quad\quad\nabla_{[\alpha_0}\mathcal{P}_{\alpha_1\cdots\alpha_p]}=0\quad\quad\textrm{Bianchi Identity.}
\end{equation}
Furthermore it is assumed that there are no sources, so the Hodge dual $\mathbf{\star \mathcal{P}}$ must be closed as well~\cite{gron07}. That is
\begin{equation}\fl
\label{StdP}
\ext{\star\mathcal{P}}=0\quad\quad\rightarrow\quad\quad\nabla_{\alpha_1}\mathcal{P}^{\alpha_1\cdots\alpha_p}=0\quad\quad\textrm{Equations of motion}.
\end{equation}

\paragraph{General properties of the $p$-form action: } Note that the theories derived from the general $p$-form action \eref{action}
respect the following properties: (i) gauge invariance
$\mathcal{L}\rightarrow\mathcal{L}$ under $\mathbf{\mathcal{K}}\rightarrow\mathbf{\mathcal{K}}+\ext{\mathbf{\mathcal{U}}}$, where $\mathcal{U}$ is a ($p-2$) -form;
(ii) only up to second order derivatives in equations of motion;
(iii) Lagrangian is up to second order in field strength $\mathbf{\mathcal{P}}$;
(iv) constructed by exterior derivatives of a $p$-form and 
(v) minimally coupled to gravity.

\section{p-form classification}
\label{sec:pclass}
From now on $\mathbf{\mathcal{P}}$ is required homogeneous: $\mathbf{\mathcal{P}}(t,\mathbf{x})\Rightarrow\mathbf{\mathcal{P}}(t)$. However, generally, the gauge field $\mathbf{\mathcal{K}}(t, \mathbf{x})$ is allowed to vary in space and time. This is different from \cite{Barrow:1996fh}, where the gauge potential is a function of time only. In order to classify the possible cases of $p$-form matter fields that can be constructed from the exterior derivative of a ($p-1$)-form, the following notation is introduced: $\{a,b\}$ where  $a$ denotes the rank of the $p$-form $\mathcal{P}$ and $b$ the rank of its Hodge dual $\star\mathcal{P}$. In four dimensional space-time ($a+b=4)$ there are three distinct cases to consider: (i)$\{2,2\}$, (ii) $\{3,1\}$ or $\{1,3\}$ and (iii) $\{4,0\}$. The degeneracy in (ii) is due to the symmetry of the equations  \eref{dP} and \eref{StdP}\setcounter{footnote}{0}\footnote{The reason why this degeneracy is not found in the case (iii) is because $\mathcal{P}\,\neq\,\ext{\mathcal{K}}$ in the case $\{0,4\}$, contrary to  \eref{pForm}. Thus one is left only with $\{4,0\}$. }. This symmetry can also be seen in the action \eref{action}, up to a prefactor.
\subsection{The $\{4,0\}$ case}
This case is equivalent to a model with a cosmological constant. According to \eref{DComp}, a $4-$form $\mathcal{D}$ can be constructed from a $3$-form. Defining $\star\mathcal{D}=c$, one finds
\begin{equation}\fl
\label{P4}
\mathcal{L}_{\rm 4f}=-\frac{1}{48}\mathcal{D}_{\mu_1\cdots\mu_4}\mathcal{D}^{\mu_1\cdots\mu_4}=\frac{1}{2}c^2\quad\rightarrow\quad T_{\mu\nu}^{\rm 4f}=\frac{1}{2}g_{\mu\nu}c^2.
\end{equation}
Also,
\begin{eqnarray}\fl
\label{dP4}
\ext{\mathcal{D}}&=0\quad\rightarrow\quad\textrm{identically satisfied}\\\fl
\label{dStP4}\ext{\star\mathcal{D}}&=0\quad\rightarrow\quad\nabla_{\mu}\,c=0\rightarrow\partial_\mu\, c=0.
\end{eqnarray}

\subsection{The $\{1,3\}$ and $\{3,1\}$ cases}
\paragraph{Case $\{1,3\}$:} According to \eref{DComp} a $1-$form $\mathcal{A}$ may be constructed from a $0$-form $\phi(t,\mathbf{x})$. One finds
\begin{equation}
\label{1form}
\mathcal{L}_{\rm 1f}=-\frac{1}{2}\mathcal{A}_{\mu}\mathcal{A}^{\mu}\quad\rightarrow\quad T_{\mu\nu}^{\rm 1f}=\mathcal{A}_{\mu}\mathcal{A}_{\nu}-\frac{1}{2}g_{\mu\nu}\mathcal{A}_\gamma\mathcal{A}^\gamma,
\end{equation}
and equations \eref{dP} and \eref{StdP} become
\begin{eqnarray}
\label{dD1}
\ext{\mathcal{A}}&=0\quad\rightarrow\quad\nabla_{[\mu}\mathcal{A}_{\nu]}=0,\\
\label{dStD1}\ext{\star\mathcal{A}}&=0\quad\rightarrow\quad\nabla_{\mu}\mathcal{A}^{\mu}=0.
\end{eqnarray}
These are the equations for a massless scalar field.
\paragraph{Case $\{3,1\}$: } According to \eref{DComp} a $3-$form $\mathcal{C}$ may be constructed from a $2$-form $\mathcal{B}$. Rewriting in terms of the Hodge dual components $\ast\mathcal{C}_\mu$ one finds 
\begin{equation}
\mathcal{L}_{\rm 3f}=-\frac{1}{12}\mathcal{C}_{\mu\nu\gamma}\mathcal{C}^{\mu\nu\gamma}\quad\rightarrow\quad T_{\mu\nu}^{\rm 3f}=\ast\mathcal{ C}_{\mu}\ast\mathcal{C}_{\nu}-\frac{1}{2}g_{\mu\nu}\ast\mathcal{C}_\gamma\ast\mathcal{C}^\gamma.
\end{equation} 
Equations \eref{dP} and \eref{StdP} become 
\begin{eqnarray}
\label{dD3}
\ext{\mathcal{C}}&=0\quad\rightarrow\quad\nabla_{\mu}\ast\mathcal{C}^{\mu}=0,\\
\label{dStD3}\ext{\star\mathcal{C}}&=0\quad\rightarrow\quad\nabla_{[\mu}\ast\mathcal{C}_{\nu]}=0.
\end{eqnarray}
Note the equivalence between the $\{1,3\}$ and $\{3,1\}$ cases. These are  the two cases of main interest in the analysis of the present paper.

\subsection{The $\{2,2\}$ case} 
The final option is a $2 -$form $\mathcal{B}$ constructed from a $1$-form $\mathcal{A}$, according to \eref{DComp}. This gives 
\begin{equation}
\mathcal{L}_{\rm 2f}=-\frac{1}{4}\mathcal{B}_{\mu\nu}\mathcal{B}^{\mu\nu}\quad\rightarrow\quad T_{\mu\nu}^{\rm 2f}=-\mathcal{B}_{\mu\gamma}\tensor{\mathcal{B}}{^\gamma_{\nu}}-\frac{1}{4}g_{\mu\nu}\mathcal{B}_{\gamma\delta}\mathcal{B}^{\gamma\delta},
\end{equation}
which coincides with the Lagrangian for source-free electromagnetism. Equations \eref{dP} and \eref{StdP} now give
\begin{eqnarray}
\label{dD2}
\ext{\mathcal{B}}&=0\quad\rightarrow\quad 3\nabla_{[\mu}\mathcal{B}_{\nu\lambda]}=0\,,\\
\label{dStD2}\ext{\star\mathcal{B}}&=0\quad\rightarrow\quad\nabla_{\mu}\mathcal{B}^{\mu\nu}=0\,,
\end{eqnarray}
These are the well known Maxwell's equations.

\subsection{Number of independent components in the Bianchi classification.}
In the following analysis the cases $\{1,3\}$ and $\{3,1\}$ will be taken into account. In order to include both scenarios, notation shall here, and throughout the rest of the paper, be such that $\mathcal{J}$ denotes either (i) the Hodge dual of a 3-form field strength $\mathcal{C}(t)=\ext{\mathcal{B}}$ (where $\mathcal{B}(t,\mathbf{x})$ is a $2$-form) or (ii) the $1$-form field strength $\mathcal{A}(t)=\ext{\phi}$ (where $\phi(t,\mathbf{x})$ is a scalar field). That both cases give rise to the same equations is evident from the previous section. The ``form-fluid'' will be referred to as the $j$-form fluid, where $j\,\in\{1,3\}$. 

Working out the equations of motion and the Bianchi Identity in either of the $\{1,3\}$ or $\{3,1\}$ cases, one may find the number of independent components of the field strength $\mathcal{J}$ allowed for in the different Bianchi space-times. Note that in this section no theory of gravity is yet assumed. The results are based solely on the equations \eref{dP} and \eref{StdP} for the $\{1,3\}$ or $\{3,1\}$ case. In particular note the equations which can be written out as
\begin{eqnarray}
\label{extStJab3}
&\partial_0\mathcal{J}_0=-3H\,\mathcal{J}_0-2a\,\mathcal{J}_1, \label{IndepCompBianchiId}\\
&\mathcal{J}_3 n_3+\mathcal{J}_2 a=0,\\
&\mathcal{J}_1 n_1=0,\\
&\mathcal{J}_2 n_2-\mathcal{J}_3 a=0.
\end{eqnarray}
The results are reported in table \ref{tab:J}, where the Bianchi classification is given in terms of eigenvalues $n_i$ of the matrix $n_{ab}$, introduced in the Behr decomposition (to be properly introduced later~\eref{strcoeff}). Here, $+$ and $-$ indicate positive and negative eigenvalues, respectively. The rightmost column gives the number of independent components. The number in parenthesis corresponds to the particular case $\mathcal{J}_0=0$, i.e a purely spatial field strength 1-form. Note that in Class B models ($a\neq0$) the number of independent components are actually reduced by two when $\mathcal{J}_0=0$, as a consequence of \eref{IndepCompBianchiId}.

It is essential to bear in mind that a certain theory of gravity might further restrict the number of allowed components. Bianchi type I will provide an example of this when employed with the Einstein Field Equations.

\begin{table}[t]
\centering
\begin{tabular}{lcccccccccc}
\toprule
\multicolumn{11}{c}{\textbf{Allowed field strength components in Bianchi space-times}} \\
\hline
Class&Type&$n_1$&$n_2$&$n_3$&$a$&$\mathcal{J}_0$&$\mathcal{J}_1$&$\mathcal{J}_2$&$\mathcal{J}_3$&\#\\
\hline
A &I&0&0&0&0&y&y&y&y&4(3)\\
&II&0&0&+&0&y&y&y&$0$&3(2)\\
&VI$_0$&0&+&--&0&y&y&$0$&$0$&2(1)\\
&VII$_0$&0&+&+&0&y&y&$0$&$0$&2(1)\\
&VIII&+&+&--&0&y&$0$&$0$&$0$&1(0)\\
&IX&+&+&+&0&y&$0$&$0$&$0$&1(0)\\
\toprule
B&III$\,$(VI$_{-1})$&0&+&--&+&y&y&y&y&3(1)\\
&IV&0&0&+&+&y&y&$0$&$0$&2(0)\\
&V&0&0&0&+&y&y&0&0&2(0)\\
&VI$_h\,(-1\neq\,h<0)$&0&+&--&$\sqrt{-h}$&y&y&$0$&$0$&2(0)\\
&VII$_h (h>0)$&0&+&+&$\sqrt{h}$&y&y&0&0&2(0)\\
\toprule
\end{tabular}
\caption{Components of $\mathcal{J}$ in the Bianchi models; $y$ denotes an allowed component. 
The last column gives the number of independent components in class A and class B, where the number in parenthesis corresponds to the particular case $\mathcal{J}_0=0$.}
\label{tab:J}
\end{table}

\section{Sourcing anisotropy with 1-forms and 3-forms in General Relativity}
\label{sec:formandbianchi}
As a theory of gravity, General Relativity is henceforth adapted. Thus the evolution is governed by the Einstein Field Equations. In particular
\begin{equation}
\label{Einst}
R_{\mu\nu}-\frac{1}{2}R\,g_{\mu\nu}=T^{\rm pf}_{\mu\nu}+T^{\rm f}_{\mu\nu}+T^{\rm 4f}_{\mu\nu}\,.
\end{equation}
Here $R_{\mu\nu}$ is the Ricci tensor components, $R=\tensor{R}{^\mu_\mu}$ is the Ricci scalar and $T^{\rm pf}_{\mu\nu}$ and $T^{\rm f}_{\mu\nu}$ the perfect fluid and $j$-form fluid\setcounter{footnote}{0}\footnote{Refer to the previous section for the meaning of $j$-form.} energy-momentum tensor components, respectively. The constant $8\pi G$ and $c$ are fixed to 1. A $4$-form is also added, playing the role of a cosmological constant (cf. \eref{dStP4}). From now on the $4$ -form will therefore be referred to as a cosmological constant. In a standard irreducible decomposition the notation used in this paper is such that
\begin{eqnarray}
&T^{\rm pf}_{\mu\nu}=\rho_{\rm pf} \,u_\mu u_\nu+p_{\rm pf}\,h_{\mu\nu}\,,\label{T1}\\
&T^{\rm f}_{\mu\nu}=\rho_{\rm f}\, u_\mu u_\nu+p_{\rm f} \,h_{\mu\nu}+2q_{(\mu}\,u_{\nu )}+\pi_{\mu\nu}\,,
\label{T2}\\
&T^{\rm 4f}_{\mu\nu}=\Lambda g_{\mu\nu}\,,
\end{eqnarray}
where $\rho_{\rm x}$ and $p_{\rm x}$ is the energy density and pressure of fluid $\rm x$ ($\rm x= perfect\,\, fluid, j-form$), respectively. Furthermore $h_{\mu\nu}=g_{\mu\nu}+u_{\mu}u_{\nu}$ gives the components of the projection tensor, $q_\mu$ is the heat flux components, $\pi_{\mu\nu}$ the anisotropic stress tensor components and  $u_\mu$ gives the fundamental observer's 4-velocity components. In general the form fluid will  be tilted\setcounter{footnote}{0}\footnote{A proper definition of a tilted fluid is a fluid for which the energy flux is non-vanishing. This will be the case here, and thus the $j$-form fluid is a tilted fluid in general.}. Also, since co-moving with the perfect fluid one has $\omega_{\mu\nu}=0$ (irrotational ) and $\dot{u}=0$ (no acceleration). \\
From \eref{EnMomTens} one finds that the energy-momentum tensor of the $j-$form is given by
\begin{equation}
\label{T3}
T^{\rm f}_{\mu\nu}=\mathcal{J}_\mu \mathcal{J}_\nu-\frac{1}{2}g_{\mu\nu}\,\mathcal{J}_{\alpha}\mathcal{J}^\alpha\,.
\end{equation}
The field strength $\mathcal{J}_{\alpha}$ will be decomposed according to 
\begin{equation}
\label{astJ}
\mathcal{J}_\alpha=-w\, u_\alpha+v_\alpha\,,
\end{equation}
where the 4-velocity $u_\alpha$ is time-like ($u_\alpha u^\alpha\,<\,0$), whereas $v_\alpha$ is defined to be orthogonal to $u_\alpha$ and therefore space-like ($v_\alpha v^\alpha\,>\,0$).\\
For the perfect fluid a barotropic equation of state,
\begin{equation}
\label{eospf}
p_{\rm pf}=(\gamma-1)\rho_{\rm pf}\phantom{000},\phantom{000}\gamma\,\in[0,2]\,,
\end{equation}
is assumed. For the $j$-form however, one finds the relation
\begin{equation}
\label{eos3f}
\fl p_{\rm f}=(\xi-1)\rho_{\rm f}\phantom{000}\textrm{where}\phantom{000}\xi=\frac{w^2-v^2/3}{w^2+v^2}+1\phantom{000}\rightarrow\phantom{000} \frac{2}{3}\,\leq\,\xi \,\leq\, 2\,.
\end{equation}
The range of $\gamma$ follows directly from requiring that $\mathcal{J}_\alpha\,\in\,\mathbb{R}$. Note that \eref{eos3f} is a dynamical equation of state, since the components of $\mathcal{J}$ in general change with time. The lower bound ($\xi \,=2/3$) is found for $w=0$ and the upper bound ($\xi \,=2$) is found for $v=0$. Note also that $w=v$ gives $\xi \,=\,4/3$, as in the case of electromagnetic radiation. 

Since, for simplicity, it is assumed that the three fluids do not interact, the three conservation equations
\begin{eqnarray}
&\nabla_{\mu} T_{\rm pf}^{\mu\nu}=0\label{Cons1}\,,\\
&\nabla_{\mu} T_{\rm f}^{\mu\nu}=0\label{Cons2}\,,\\
&\nabla_{\mu} T_{\rm 4f}^{\mu\nu}=0\label{Cons3}\,,
\end{eqnarray}
must be satisfied. The first equation will be calculated explicitly from \eref{T1} and the two last only implicitly through the corresponding Bianchi Identity.

\subsection{Bianchi models and choice of frame}
In dimension three there are nine different (classes of) Lie algebras -- these are the nine different Bianchi types I-IX. An intimate relationship between Killing vectors (the symmetries of the space) and Lie algebras may be established. In four-dimensional space-times, the Bianchi models, in addition to the Kantowski-Sachs model,  provide a nice way of classifying all anisotropic, yet spatially homogeneous universe models. In technical terms one says that the Bianchi models admit a three dimensional isometry group $G_3$ acting simply transitively on spatial hypersurfaces. The Kantowski-Sachs model is the only spatially homogeneoues model not allowing for a three dimensional group acting simply transitive on the spatial hypersurfaces. The line element of the Bianchi models can be written as
\begin{equation}
{\rm ds}^2=-{\rm d}t^2+\delta_{ab}\,\omega^a\omega^b\phantom{000}\textrm{where}\phantom{000}\ext{\omega}^a=-\frac{1}{2}\tensor{\gamma}{^a_{bc}}\omega^{b}\wedge\omega^c-\gamma^a_{~0c}{\rm d}t\wedge\omega^c.
\end{equation}
 $\{\omega^a\}$ is here a triad of 1-forms, and $\tensor{\gamma}{^a_{bc}}$ are the spatial structure coefficients of the Lie algebra characterizing the corresponding Bianchi type. The structure constants, which depend only on time, are typically split using the Behr decomposition in a vector $a_b$ and in a symmetric matrix $n^{ab}$
\begin{eqnarray}
\label{strcoeff}
\tensor{\gamma}{^m_{ab}}&=\tensor{\varepsilon}{_{abn}}n^{nm}+a_a\tensor{\delta}{_b^m}-a_b\tensor{\delta}{_a^m}\,.
\end{eqnarray}
By definition, elements of a Lie algebra satisfies the Jacobi Identity, 
\begin{equation}
\label{JacId}
\left[\mathbf{X},\left[\mathbf{Y},\mathbf{Z}\right]\right]+\left[\mathbf{Y},\left[\mathbf{Z},\mathbf{X}\right]\right]+\left[\mathbf{Z},\left[\mathbf{X},\mathbf{Y}\right]\right]=0\,,
\end{equation}
implying that vector $a_b$ lies in the kernel of the symmetric matrix $n^{ab}$
\begin{equation}
n^{ab}a_{b}=0\,.
\end{equation}
Refer to \cite{gron07}, Chap. 15 for details. The tetrad $\{\omega^\alpha\}$ is dual to the vector basis $\{\mathbf{e}_\alpha\}$, which must satisfy the relation
\begin{equation}
\label{bianchibasis}
[\mathbf{e}_\mu,\mathbf{e}_\nu]=\tensor{\gamma}{^\rho_{\mu\nu}}\mathbf{e}_\rho.
\end{equation}
The time direction is chosen orthogonal to the orbits of the isometry subgroup (i.e.: orthogonal to the three-dimensional hypersurfaces of homogeneity), and the fundamental observer's 4-velocity is aligned with this direction. It is given by 
\begin{equation}
\mathbf{u}=\frac{\partial}{\partial t}\,,
\end{equation}
where $t$ is the cosmological time.

A convenient frame in which to conduct the analysis is the orthonormal frame. As mentioned, such a frame will give first order evolution equations alongside a set of constraints which are useful to simplify the analysis. Without loss of generality, a choice is made such that $\e_1$ points in the direction of the vector $a_{b}$, leaving the remaining frame vectors $e_2$ and $e_3$ defined up to a rotation. This will become more transparent later, when the gauge freedom is discussed. Since a dynamical systems approach is adapted in this paper, the set of equations will be rewritten in expansion-normalized variables according to 
\begin{eqnarray}
\label{defENC}
&\fl\Sigma_{+}=\frac{\sigma_+}{H}\,,\quad\quad \Pi_{+}=\frac{\pi_+}{H^2}\,,\quad\quad \Omega_i=\frac{\rho_i}{3H^2}\phantom{0},\quad\quad A_i=\frac{a_i}{H}\phantom{000.},\nonumber\\
&\fl\Sigma_{-}=\frac{\sigma_-}{H}\,,\quad\quad\Pi_{-}=\frac{\pi_-}{H^2}\,,\quad\quad \Omega_\Lambda=\frac{\Lambda}{3H^2}\,,\quad\quad N_{+}=\frac{n_+}{H}\phantom{00.},\nonumber\\
&\fl\Sigma_{\times}=\frac{\sigma_\times}{H}\,,\quad\quad\Pi_{\times}=\frac{\pi_\times}{H^2}\,,\quad\quad V_i=\frac{v_i}{\sqrt{6}H}\,,\quad\quad N_{-}=\frac{n_-}{H}\phantom{00.},\\
&\fl\Sigma_2\,=\,\frac{\sigma_{2}}{H}\,,\quad\quad\Pi_2\,=\,\frac{\pi_2}{H^2}\,,\quad\quad\Theta=\frac{w}{\sqrt{6} H}\,,\quad\quad N_{\times}=\frac{n_\times}{H}\phantom{00.},\nonumber\\
&\fl\Sigma_3\,=\,\frac{\sigma_{3}}{H}\,,\quad\quad\Pi_3\,=\frac{\pi_3}{H^2}\,,\quad\quad\Xi_i=\frac{q_i}{3H^2}\,\phantom{0},\quad\quad\Sigma^2=\frac{\sigma_{ab}\sigma^{ab}}{6 H^2}.\nonumber
\end{eqnarray}
where $H$ is the Hubble parameter.
In this way the equations of motion become an autonomous system of differential equations and all equilibrium points will represent self-similar cosmologies (to be defined).

The above definitions slightly differ from the definitions used in \cite{Coley:2004jm, dynSys, relCos} and the precise decomposition of the different quantities has therefore explicitely been included in \ref{app:Decomp}.

The Bianchi space-times analyzed in the present paper (I-VII$_h$) admit an Abelian $G_2$ subgroup and this allows for a 1+1+2 split of the four dimensional space-time. As will become clear later, this translates into a $1+1+2$ decomposition of the Einstein Field Equations, the Jacobi and the Bianchi identities. When the orthonormal frame approach is applied to $G_{2}$ cosmologies, it is common to choose a \textit{group-invariant orbit-aligned} frame, i.e. an orthonormal frame which is invariant under the action of $G_{2}$~\cite{dynSys}. In this way the complete set of \textit{independent} basic variables reduces to 
\begin{equation}
\label{}
\{H,\sigma_{AB},\sigma_{1A},\Omega_1, n_{AB}, a\}\phantom{000}\textrm{and}\phantom{000}\{q_a,\pi_{AB},\pi_{1A},\rho_{\rm f},\rho_{\rm pf	}\}\,,
\end{equation}
where the capital letter indices $A, B$ run over 2 and 3, which are taken to be the two Killing vector fields chosen tangential to the group orbits of the $G_2$ subgroup. $\sigma_{11}$ and $\pi_{11}$ are derived from the trace-free property of these tensors, the isotropic pressures from equations of state and $a$ is chosen to be  equal to $(a,0,0)$. This also suggests the convention $A\,\equiv\,A_1=a_1/H$, which will be used henceforth. The energy densities, $\rho_{\rm pf}$ and  $\rho_{\rm f}$, refer to the perfect fluid and to the ``form-fluid'' respectively. This frame choice is further specified through rotations given by
\begin{equation}
\label{rot}
\Omega_{A}=\varepsilon_{AB}\sigma^{1B}\,.
\end{equation}
This equation follows from the propagation equation for $a$, which in turn comes from taking the trace of the Jacobi Identity \eref{JacId} applied to the vectors $(\mathbf{u},\textbf{e}_a,\textbf{e}_b)$ (\cite{gron07}, chap. 15.4).\setcounter{footnote}{0}\footnote{Note that for all Bianchi A models this relation becomes arbitrary, since $a$ vanishes. Thus in the type A models one may choose $\Omega_2$ and $\Omega_3$ differently. The choice implemented in the $G_2$ frame in this paper, however, will always be that of \eref{rot}.} By the above equation two of the frame rotations are specified. There remains in this way only one rotational gauge freedom (rotation of the frame around the $\mathbf{e}_1$-axis). This rotation is (when using  the angle $\phi$ which is constant on the orbit of $G_{2}$) given by the rotation
\begin{eqnarray}
\label{frameRot}
\eqalign{\mathbf{\tilde{e}}_2&=\phantom{0}\,\cos\phi\,\mathbf{e}_2+\sin\phi\,\mathbf{e}_3\,,\\
\mathbf{\tilde{e}}_3&=-\sin\phi\,\mathbf{e}_2+\cos\phi\,\mathbf{e}_3\,.}
\end{eqnarray}
Following~\cite{Coley:2004jm} the gauge freedom is left in the equations\setcounter{footnote}{0}\footnote{This differs from the general treatment in \cite{dynSys}, where gauge independent quantities are constructed.}
introducing the (expansion-normalized) local angular velocity $R_{a}$ of a Fermi-propagated axis with respect to the triad $\bf{e}_{a}$, with components
\begin{equation}
R_1\,\equiv\,\frac{\Omega_1}{H}=\phi'\phantom{000}\textrm{and}\phantom{000}\mathbf{R}_{\rm c}\,\equiv\,R_2+iR_3\,\,\equiv\,\frac{\Omega_2}{H}+i\frac{\Omega_3}{H}.
\end{equation}
Here $'$ denotes derivative with respect to dynamical time (to be properly defined in the next section). Since $\phi$ is given with respect to a frame for which $\phi'=0$, only $R_1$ is free in the $G_2$ aligned frame.\\
The complex variable  $\mathbf{R}_c$ is introduced in order to simplify the equations when the gauge symmetry is still not fixed. This is in accordance with~\cite{Coley:2004jm}\footnote{Note that there are some small conventional discrepancies in the current notation compared to that of \cite{Coley:2004jm}} and becomes a particularly useful tool in constructing gauge independent quantities. In particular
\begin{eqnarray}
\label{complexVar}
\eqalign{\mathbf{N}_\Delta=N_-+iN_\times\,,\quad\quad &\mathbf{\Phi_1}=\Xi_2+i\Xi_3\,,\quad\quad \mathbf{V}_c=V_2+iV_3\,,\\
\mathbf{\Sigma}_\Delta=\Sigma_-+i\Sigma_\times\,,\quad\quad &\mathbf{\Pi}_1=\Pi_2+i\Pi_3\,,\\
\mathbf{\Sigma}_1=\Sigma_2+i\Sigma_3\,,\quad\quad &\mathbf{\Pi}_\Delta=\Pi_-+i\Pi_\times.}
\end{eqnarray}
Some of the quantities introduced so far are independent under transformations over the remaining gauge freedom,\eref{frameRot}, whereas others change. To distinguish these quantities from each other, note the following two definitions.

{\dfn[Scalar]\label{def:Scalar} Any quantity invariant under the transformation \eref{frameRot} is said to be a scalar.}
{\dfn[Spin-n object] Any quantity $\mathbf{X}$ transforming such that $$\mathbf{X}\rightarrow\exp{(in\phi)}\mathbf{X}$$ under the transformation \eref{frameRot} is said to be a spin-n object.}

\paragraph*{}The above variables may now be classified as scalars or spin-n objects by looking at how they transform under  the gauge transformation (rotation) \eref{frameRot}:
\begin{eqnarray}
\label{VarTrans}
&\fl \{\Omega_{\Lambda}, A,N_+,\Sigma_+, \mathbf{\Sigma}_1, \mathbf{\Sigma}_\Delta,\mathbf{N}_\Delta\}\,&\rightarrow\,\{\Omega_{\Lambda}, A,N_+,\Sigma_+, \mathrm{e}^{i\phi}\mathbf{\Sigma}_1,\mathrm{e}^{2i\phi} \mathbf{\Sigma}_\Delta,\mathrm{e}^{2i\phi}\mathbf{N}_\Delta\}\,, \label{trans:1} \\
&\fl \{\Omega_{\rm pf}, \Omega_{\rm f},\Xi_1,\Pi_+,\mathbf{\Pi}_1,\mathbf{\Pi}_\Delta, \mathbf{\Phi_1}\}&\rightarrow\{\Omega_{\rm pf}, \Omega_{\rm f},\Xi_1,\Pi_+,\mathrm{e}^{2i\phi}\mathbf{\Pi}_\Delta,\mathrm{e}^{i\phi}\mathbf{\Pi}_1, \mathrm{e}^{i\phi}\mathbf{\Phi_1}\}\,,\\
&\fl \{\Theta,V_1,\mathbf{V}_c \}\,&\rightarrow\{\Theta,V_1,\mathrm{e}^{i\phi}\mathbf{V}_c \}.
\end{eqnarray}
Observe that the complex conjugates of the spin-n objects transform in a similar manner. In particular $\exp(ix)^*\,=\,\exp(-ix)$)\footnote{By such for instance $\mathbf{\Sigma}_\Delta\mathbf{\Sigma}_\Delta^*$ becomes a scalar quantity, since the exponentials cancel out.}. This makes it very easy to construct all sorts of physical variables (gauge independent quantities) from the spin-n objects. 

In the rest of the paper the variables used for the $j$-form matter content will be the four independent components $\{\Theta,V_1,\mathbf{V}_c,\mathbf{V}_c^*\}$, rather than the six energy-momentum tensor components built from them. However, even though they will not be used any further, it is instructive to give a list of the components in the standard irreducible decomposition:
\begin{eqnarray}
\label{enMomTenPartsC}
&\Omega_{\rm f}=\Theta^2+V_1^2+\abs{\mathbf{V}_c}^2\,,\nonumber\\
&\mathbf{\Pi}_\Delta=\sqrt{3}\mathbf{V}_c^2\,,\nonumber\\
&\mathbf{\Pi}_1=2\sqrt{3}V_1\mathbf{V}_c\,,\nonumber\\
&\Pi_+=\abs{\mathbf{V}_c}^2-2V_1^2\,,\nonumber\\
&\Xi_1=-2\,\Theta V_1\,,\nonumber\\
&\mathbf{\Phi}_1=-2\,\Theta\mathbf{V}_c.
\end{eqnarray}

\section{System of equations}
\label{sec:systeq}

The complete system of equations can be formulated through the variables~\eref{complexVar}.   The first set of equations is obtained through \eref{dP} and \eref{StdP} (with $p=1$ or $3$). The second set is composed by the shear propagation equations resulting from the Einstein Field Equations~\eref{Einst}. The third set is obtained from the contracted Bianchi identities \eref{Cons1} and \eref{Cons3} and the last set from the Jacobi Identity \eref{JacId}. The system reduces to 15 first order scalar ODEs (compactified below into 11 equations by the complex notation introduced above):
\begin{eqnarray}\fl
\textit{j}\textrm{-form eq.s }\eref{dP},\eref{StdP}\phantom{.}\cases{\label{FluidEqs}
V_1'=\left(q+2\Sigma_+\right)V_1-2\sqrt{3}\Re\{\mathbf{\Sigma}_1\mathbf{V}_c^*\}\,,\\
\mathbf{V}_c'=\left(q-\Sigma_+-iR_1\right)\mathbf{V}_c-\sqrt{3}\mathbf{\Sigma}_\Delta\mathbf{V}_c^*\,,\\
\Theta'=(q-2)\Theta-2AV_1\,,}\\\fl
\textrm{Einst. Eq.s }\eref{Einst}\quad\phantom{00}\cases{\label{EinstEq}
\mathbf{\Sigma}_1'=\left(q-2-3\Sigma_+-iR_1\right)\mathbf{\Sigma}_1-\sqrt{3}\mathbf{\Sigma}_\Delta\mathbf{\Sigma}_1^*+2\sqrt{3}V_1\mathbf{V}_c\,,\\
\mathbf{\Sigma}_\Delta'=(q-2-2iR_1)\mathbf{\Sigma}_\Delta+\sqrt{3}\mathbf{\Sigma}_1^2-2\mathbf{N}_\Delta\left(iA+N_+\right)+\sqrt{3}\mathbf{V}_c^2,\\
\Sigma_+'=\left(q-2\right)\Sigma_++3\abs{\mathbf{\Sigma}_1}^2-2\abs{\mathbf{N}_\Delta}^2+\abs{\mathbf{V}_c}^2-2V_1^2\,,}\\\fl
\textrm{En. cons. }\eref{Cons1},\eref{Cons3}\phantom{.}
\cases{\label{EnCons}
\Omega_\Lambda'=2(q+1)\Omega_\Lambda\,,\\
\Omega_{\rm pf}'=2\left(q+1-\frac{3}{2}\gamma \right)\Omega_{\rm pf}\,,}\\\fl
\textrm{Jacobi Id. }\eref{JacId2}\quad\quad\phantom{0}
\cases{\label{JacId2}
\mathbf{N}_\Delta'=\left(q+2\Sigma_+-2iR_1\right)\mathbf{N}_\Delta+2\mathbf{\Sigma}_\Delta N_+\,,\\
N_+'=\left(q+2\Sigma_+\right)N_++6\Re\{\mathbf{\Sigma}_\Delta^*\mathbf{N}_\Delta\}\,,\\
A'=\left(q+2\Sigma_+\right)A.}
\end{eqnarray}
where $\Re$ and $\Im$ represent the real and imaginary parts, respectively, and  where $'$ represents derivative with respect to dynamical time variable $\tau$ defined by
\begin{equation}
\qquad \frac{1}{H}=\frac{{\rm d} t}{\rm d\tau},
\end{equation}
where $t$ is the proper time.
Finally, the deceleration parameter $q$ has also been introduced in the above set of equations. It is defined through
\begin{equation}
\label{q1}
\qquad \dot{H}=-(1+q)H^2.
\end{equation}
These dynamical equations are subject to a set of 6 scalar constraints given by the four equations
\begin{eqnarray}\fl
& \sqrt{3}\mathbf{N}_\Delta^*\mathbf{V}_c-i\mathbf{V}_c^*\left(A+iN_+\right)=0\label{Constr1}\,,\\\fl
&1=\Sigma_{+}^2+\abs{\mathbf{\Sigma}_{\Delta}}^2+\abs{\mathbf{\Sigma}_1}^2+\Omega_{\rm pf}+\Theta^2+V_1^2+\abs{\mathbf{V}_c}^2+\Omega_{\Lambda}+A^2+\abs{\mathbf{N}_\Delta}^2\label{Constr2}\,,\\\fl
&2\,\Theta V_1=2\left(A\Sigma_+-\Im\{\mathbf{\Sigma}_\Delta\mathbf{N}_\Delta^*\}\right)\label{Constr3}\,,\\\fl
&2\,\Theta\mathbf{V}_c=\left(i\frac{N_+}{\sqrt{3}}-\sqrt{3}A\right)\mathbf{\Sigma}_1+i\mathbf{N}_\Delta\mathbf{\Sigma}_1^*\label{Constr4}\,,
\end{eqnarray}
as well as one group constraint determining the group-parameter $h$ in the type VI$_h$ and VII$_h$ models: 
\begin{equation}
A^2+h\left(3\abs{\mathbf{N}_\Delta}^2-N_+^2\right)=0. 
\end{equation}
In the above list of constraints, \eref{Constr1} comes from the Bianchi Identity and the others directly from the Einstein Field Equations. Also note that $q$ may be expressed as
\begin{equation}
\label{q}\fl
q=2\Sigma^2+(\frac{3}{2}\gamma -1)\Omega_{\rm pf}+2\Theta^2-\Omega_\Lambda,\quad\textrm{where}\quad\Sigma^2\equiv\Sigma_{+}^2+\abs{\mathbf{\Sigma}_{\Delta}}^2+\abs{\mathbf{\Sigma}_1}^2.
\end{equation}
This follows from \eref{q1} in combination with the Raychaudhuri's equation and the relation
\begin{equation}
2\Theta^2=\left(\frac{3}{2}\xi-1\right)\Omega_{\rm f}\,,
\end{equation}
where $\xi$ is defined in \eref{eos3f}.
This shows that the only component of $\mathcal{J}$ that enters in the equation for the deceleration parameter $q$ is the time component $\Theta$\\
\section{No-hair theorems for the $j$-form}
\label{sec:nohair}
No-hair theorems that in previous literature has been established for the Bianchi space-times in the presence of a cosmological constant and a perfect fluid are in this section extended to the presence of the $j$-form in the equations. In particular it will be demonstrated that the cosmic no-hair theorem ~\cite{Wald:1983ky} is valid also in this case \setcounter{footnote}{0}\footnote{Note that an anisotropic fluid may sustain an inflationary phase of expansion if it violates the strong or dominant energy condition \cite{Maleknejad:2012as}. A $j$-form respects these energy conditions.}. To this end it is useful to formally define a de Sitter Universe.

\begin{dfn}[Flat de Sitter universe]
A flat de Sitter Universe is a universe which is maximally symmetric with flat spatial sections {\rm (} $\Sigma^2=A^2=\abs{\mathbf{N}_\Delta}^2=0${\rm )} and for which
\begin{equation}
\label{deSitter}
q=-1.
\end{equation} 
\end{dfn}
This result implies from \eref{q1} that $H'=0$. If $q=-1$ is imposed in \eref{q}, one may easily see by use of \eref{Constr2} that a de Sitter solution may be reached in the Bianchi types I-VII$_h$ if and only if $\Omega_\Lambda=1$ or if $\Omega_{\rm pf}=1\,,\,\gamma=0$.\\
Having this definition, the cosmic no-hair theorem can be extended to the case where $j$-form matter is contributing to the content of the universe
\begin{thm}[First no-hair theorem]
\label{thmDeSitter}
All Bianchi space-times I-VII$_h$ with a $j$-form, a non-phantom perfect fluid\setcounter{footnote}{0}\footnote{A perfect fluid is said to be phantom if $\gamma \,<\,0$. }  and a positive cosmological constant will be asymptotically de Sitter with $\Omega_\Lambda=1$ in the case where $\gamma\,>\,0$ (and similarly $\Omega_\Lambda+\Omega_{\rm pf}=1$ in the case where $\gamma=0$).
\end{thm}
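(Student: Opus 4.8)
The plan is to run Wald's monotone-function argument inside the normalized system: exhibit $\Omega_\Lambda$ as a monotone non-decreasing function on a compact invariant state space and then read off the limit from the Hamiltonian constraint \eref{Constr2}. The first step is to put $q+1$ in a manifestly non-negative form. Using the expression \eref{q} for the deceleration parameter together with \eref{Constr2} to eliminate $\Omega_\Lambda$, one obtains the identity
\[
q+1 \;=\; 3\Sigma^2 + \frac{3}{2}\gamma\,\Omega_{\rm pf} + 3\Theta^2 + V_1^2 + \abs{\mathbf{V}_c}^2 + A^2 + \abs{\mathbf{N}_\Delta}^2 .
\]
Every term is non-negative: the shear and curvature squares trivially, $\frac{3}{2}\gamma\,\Omega_{\rm pf}\ge 0$ because the fluid is non-phantom ($\gamma\ge 0$) with $\Omega_{\rm pf}\ge 0$, and the $j$-form enters only through the non-negative pieces $\Theta^2,V_1^2,\abs{\mathbf{V}_c}^2$ — this is precisely why the $j$-form does not spoil the classical theorem. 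Hence $q+1\ge 0$.

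Next I would feed this into the evolution equation \eref{EnCons}, $\Omega_\Lambda' = 2(q+1)\Omega_\Lambda$. For an expanding model with $\Lambda>0$ one has $\Omega_\Lambda>0$, so $\Omega_\Lambda'\ge 0$ and $\Omega_\Lambda$ is non-decreasing along every orbit. Because \eref{Constr2} is a sum of squares and non-negative energy densities equal to $1$, and is propagated by the equations of motion (with $\Omega_\Lambda\ge 0$, $\Omega_{\rm pf}\ge 0$ sign-preserved), the flow is confined to a compact invariant set on which $0<\Omega_\Lambda\le 1$. In particular, forward orbits are complete and have non-empty $\omega$-limit sets, and $\Omega_\Lambda(\tau)$ increases to a limit $L\in(0,1]$.

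To identify the limit I would invoke the monotonicity principle. On the $\omega$-limit set the continuous non-decreasing function $\Omega_\Lambda$ is constant equal to $L$, so $\Omega_\Lambda'=0$ there; since $\Omega_\Lambda=L>0$ this forces $q+1=0$ on the $\omega$-limit set. If $\gamma>0$, the non-negative identity above then makes $\Sigma^2=\Omega_{\rm pf}=\Theta^2=V_1^2=\abs{\mathbf{V}_c}^2=A^2=\abs{\mathbf{N}_\Delta}^2=0$ on that set, whence \eref{Constr2} gives $\Omega_\Lambda=1$, i.e.\ $L=1$; the $\omega$-limit set is then the single flat de Sitter equilibrium (with $q=-1$, cf.\ \eref{deSitter}), so the orbit converges to it. If $\gamma=0$, only $\Omega_{\rm pf}$ can survive, giving $\Omega_\Lambda+\Omega_{\rm pf}=1$ on the limit set; since $\Omega_{\rm pf}$ then obeys the same equation $\Omega_{\rm pf}'=2(q+1)\Omega_{\rm pf}$, the ratio $\Omega_{\rm pf}/\Omega_\Lambda$ is a first integral which fixes the limiting values.

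The step requiring the most care is this last one: one must justify that boundedness and monotonicity of $\Omega_\Lambda$ force $\Omega_\Lambda'=0$ on the whole $\omega$-limit set (using its invariance), not merely along a time sequence, and that the $\omega$-limit set is non-empty — which is exactly where the compactness and invariance of the constraint surface from the previous step are used. The remainder is the algebraic identity for $q+1$ and the immediate consequences of \eref{Constr2}.
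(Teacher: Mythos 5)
Your proposal follows essentially the same route as the paper's own proof: the same identity expressing $q+1$ as a sum of non-negative terms (obtained by combining \eref{q} with the constraint \eref{Constr2}), the same use of $\Omega_\Lambda'=2(q+1)\Omega_\Lambda$ to show $\Omega_\Lambda$ is monotone, positive and bounded by $1$, the same conclusion that $q\rightarrow-1$ forces every term (shear, curvature, $j$-form and, for $\gamma>0$, the fluid) to vanish so that $\Omega_\Lambda\rightarrow1$, and the same treatment of $\gamma=0$ where $\Omega_{\rm pf}$ behaves as a second cosmological constant giving $\Omega_\Lambda+\Omega_{\rm pf}\rightarrow1$. The only difference is cosmetic: you extract $q+1\rightarrow0$ via an $\omega$-limit-set (LaSalle-type) argument, where the paper simply asserts $\lim_{\tau\rightarrow\infty}\Omega_\Lambda'=0$ from monotonicity and boundedness, so your version is, if anything, slightly more careful on that step.
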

\begin{proof}
Combining \eref{Constr2} and \eref{q} one finds 
\begin{equation}
\gamma \,\geq\,0\quad\Rightarrow\quad q+1\,\geq\,0.
\end{equation}
Thus, from \eref{EnCons} it is evident that $\Omega_{\Lambda}$ increases monotonically. If, therefore,  for some instance of time $\tau=\tau_0$ one has $\Omega_{\Lambda}\,>0$, then one must also have $\Omega_{\Lambda}\,>\,0\quad\forall\quad\tau\,>\,\tau_0$. Additionally, since $\Omega_{\Lambda}$ is monotonically increasing, and bounded by $\Omega_{\Lambda}\,\leq\,1$, \eref{Constr2}, one must have
\begin{equation}
\lim_{\tau\rightarrow\infty}\Omega_{\Lambda}'=0.
\end{equation}
By \eref{q} this gives $q=-1$. It then follows that
\begin{equation}
\vert\mathbf{N}_\Delta\vert^2+A^2+\frac{3}{2}\gamma \Omega_{\rm pf}+3\Theta^2+\vert\mathbf{V}_c\vert^2+\vert V_1\vert^2+\vert\mathbf{\Sigma}_\Delta\vert^2+\Sigma_1^2+\Sigma_+^2=0\,,
\end{equation} which in turn implies that all terms must vanish since all terms are strictly positive. For $\gamma\,>\,0$ it thus follows that $\lim_{\tau\rightarrow\,\infty}\Omega_\Lambda=1$ (first eq. in \eref{EnCons}). For $\gamma\,=\,0$, one finds that $\Omega_{\rm pf}$ is just another cosmological constant, and thus the results are dynamically the same, with $\Omega_{\Lambda}+\Omega_{\rm pf}=1$.
\end{proof} 
A similar but less general theorem holds also in the case of a perfect fluid with $0\leq\,\gamma\,<\,2/3$:
\begin{thm}[Second no-hair theorem]
\label{NoHair}
All Bianchi space-times I-VII$_h$ with a $j$-form, a non-phantom perfect fluid $\Omega_{\rm pf}$ with equation of state parameter $0\leq\,\gamma\,<\,2/3$ will be asymptotically quasi de Sitter with $q=\frac{3}{2}\gamma-1\,<\,0$.
\end{thm}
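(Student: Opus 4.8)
The plan is to mimic the monotonicity argument of Theorem~\ref{thmDeSitter}, but with $\Omega_{\rm pf}$ in the role that $\Omega_\Lambda$ played there, under the standing assumption that no cosmological constant is present ($\Omega_\Lambda=0$); if $\Omega_\Lambda>0$ then Theorem~\ref{thmDeSitter} already forces $q=-1$, which for $\gamma\in(0,2/3)$ is the different (and stronger) de Sitter conclusion. One also tacitly assumes $\Omega_{\rm pf}>0$ at some time, since otherwise $\gamma$ is not defined and there is nothing to prove.

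First I would establish the key identity. Substituting the Hamiltonian constraint \eref{Constr2} (with $\Omega_\Lambda=0$) into the expression \eref{q} for $q$ gives, after a one-line rearrangement,
\begin{equation}
q+1-\frac{3}{2}\gamma=2\Sigma^2+2\Theta^2+\left(1-\frac{3}{2}\gamma\right)\left(1-\Omega_{\rm pf}\right),
\end{equation}
where $1-\Omega_{\rm pf}=\Sigma^2+\Theta^2+V_1^2+\abs{\mathbf{V}_c}^2+A^2+\abs{\mathbf{N}_\Delta}^2\geq 0$ by \eref{Constr2}. Since $\gamma<2/3$ every term on the right is non-negative, hence $q+1-\frac{3}{2}\gamma\geq 0$, with equality iff $\Sigma^2=\Theta^2=0$ and $\Omega_{\rm pf}=1$ (and then also $V_1=\abs{\mathbf{V}_c}=A=\abs{\mathbf{N}_\Delta}=0$). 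The evolution equation \eref{EnCons} then reads $\Omega_{\rm pf}'=2(q+1-\frac{3}{2}\gamma)\Omega_{\rm pf}\geq 0$, so $\Omega_{\rm pf}$ is monotone non-decreasing; it is bounded above by $1$ (again by \eref{Constr2}) and below by its positive initial value, so $\Omega_{\rm pf}$ converges to some $L\in(0,1]$.

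Next I would extract the asymptotics. Convergence of $\Omega_{\rm pf}$ makes $\int^\infty\Omega_{\rm pf}'\,{\rm d}\tau$ finite, and since $\Omega_{\rm pf}$ stays bounded away from zero this yields $\int^\infty\!\left[2\Sigma^2+2\Theta^2+(1-\frac{3}{2}\gamma)(1-\Omega_{\rm pf})\right]{\rm d}\tau<\infty$ with a non-negative integrand, so the integrand has $\liminf=0$ along some sequence $\tau_n\to\infty$; along that sequence $\Sigma^2\to0$, $\Theta^2\to0$ and $\Omega_{\rm pf}\to1$, and monotonicity of $\Omega_{\rm pf}$ upgrades the last to the full limit $\Omega_{\rm pf}\to1$. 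Feeding $\Omega_{\rm pf}\to1$ back into \eref{Constr2} forces $\Sigma^2,\Theta^2,V_1^2,\abs{\mathbf{V}_c}^2,A^2,\abs{\mathbf{N}_\Delta}^2\to0$, and then \eref{q} gives $q\to\frac{3}{2}\gamma-1$, which is $<0$ because $\gamma<2/3$. Thus the model approaches the spatially flat, shear-free, $j$-form-free FLRW solution driven by the $\gamma$-fluid, i.e. it is asymptotically quasi-de Sitter. In monotonicity-principle language, $\Omega_{\rm pf}$ is a monotone function whose level sets confine the $\omega$-limit set to $\{q+1-\frac{3}{2}\gamma=0\}$, the FLRW equilibrium (with $N_+$ the only variable not pinned to zero, but $N_+$ enters neither $q$ nor --- since $n_1=0$ here --- the spatial curvature).

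I expect the only delicate point to be that \eref{Constr2} does not bound $N_+$, so the reduced state space is not compact in the $N_+$-direction (the familiar feature of type $\mathrm{VII}_0$). That is precisely why I would run the extraction step through the explicit $\liminf$ argument above rather than a bare appeal to the monotonicity principle on a compact set: that argument, combined with \eref{Constr2}, delivers $\Omega_{\rm pf}\to1$ and hence $q\to\frac{3}{2}\gamma-1$ with no control of $N_+$ needed. (If genuine pointwise convergence of the full state vector were wanted, one would add that on the limit set $N_+'=q\,N_+$ with $q=\frac{3}{2}\gamma-1<0$ so $N_+$ decays there, and for the class B types the group constraint $A^2+h(3\abs{\mathbf{N}_\Delta}^2-N_+^2)=0$ bounds $N_+$ outright.)
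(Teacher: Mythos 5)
Your proof is correct and follows essentially the same route as the paper's: the paper bounds $q\geq(\tfrac{3}{2}\gamma-1)\Omega_{\rm pf}$ to get $(\ln\Omega_{\rm pf})'\geq 3(\tfrac{2}{3}-\gamma)(1-\Omega_{\rm pf})\geq0$ and concludes $\Omega_{\rm pf}\rightarrow1$, $q\rightarrow\tfrac{3}{2}\gamma-1$, which is your monotonicity argument with the exact identity kept instead of the inequality. Your additional steps (the $\liminf$ extraction, the explicit use of the constraint to kill the remaining variables, and the remarks on $\Omega_\Lambda=0$ and the unboundedness of $N_+$) merely spell out details the paper leaves implicit.
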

\begin{proof}
From equation \eref{q} with $\Omega_\Lambda=0$ it is evident that $q\,\geq\,(3\gamma-2)\Omega_{\rm pf}/2$. Using this in the evolution equation for $\Omega_{\rm pf}$ in \eref{EnCons}, and using the natural logarithm, one finds
\begin{equation}
(\ln\Omega_{\rm pf})'\,\geq\,3\left(\frac{2}{3}-\gamma\right)\left(1-\Omega_{\rm pf}\right)\,\geq\,0\quad\forall\quad0\leq \gamma<\frac{2}{3}.
\end{equation}
Thus $\lim_{\tau\rightarrow\infty} \Omega_{\rm pf}=1$, with $q=\frac{3}{2}\gamma-1\,<\,0$.
\end{proof}

\section{Equilibrium points and choice of gauge}
\label{sec:eqpoints}

As anticipated, in the dynamical systems approach a relevant role is given to the equilibrium points, as they provide exact solutions of the system. In order to formally define an equilibrium point in a gauge independent manner,  consider the definition of a scalar given in Def. \ref{def:Scalar}. Then, a gauge independent definition of an equilibrium point is  

\begin{dfn}[Equilibrium point] 
An equilibrium point P is a set on which all scalars are constants on $P$ as functions of $\tau$. 
\end{dfn}
Equilibrium points found from \eref{FluidEqs}-\eref{JacId2} represent self-similar cosmological models. A definition of self-similar is as follows.
\begin{dfn}[Homothety and Self-similar space-time] 
A self-similar space-time is a space-time possessing a proper homothety. A vector field $\mathbf{H}$ is said to be a (proper) homothety if
\begin{equation}
\pounds_{\mathbf{H}}\mathbf{g}=k\,\mathbf{g},
\end{equation}
where $\mathbf{g}$ is the metric tensor, $k$ a (non-zero) constant and $\pounds$ denotes the Lie derivative.
\end{dfn}
The dynamical system is of the form 
\begin{equation}
X'=F(X), \qquad {\cal C}_i(X)=0,
\end{equation}
where $X$ is the $n$-dimensional state space vector of the system, ${\cal C}_i(X)=0$ is the set of constraints, and $F$ is an $n$-dimensional vector function. The local stability of the self-similar cosmological solutions represented by equilibrium points, $X_0$ (where $F(X_0)=0$), may now be computed by looking at displacements from such points to linear order:
\begin{equation}
(\delta X)'=J\,(\delta X).
\end{equation}
Here $J$ is the Jacobian matrix of the system. The eigenvalues $l$ are given by the equation
\begin{equation}
\det(J-I\,l)=0.
\end{equation}
This method will explicitly be applied to the Bianchi types I and V. Before that, however, another important point to clarify in the analysis is the choice of gauge. As seen in~\eref{FluidEqs}-\eref{JacId2}, the gauge freedom $R_1$ is left in the equations. Gauge freedom represents unphysical degrees of freedom. This freedom may be used in order to simplify the analysis. In the following the two gauges used in the analysis are discussed:

\paragraph{F-gauge ($R_1=0$):} This is in some sense a quite physical gauge\setcounter{footnote}{0}\footnote{Remember that $R_1\equiv\phi'$.}: $R_1$ specifies the angular velocity compared to a Fermi-Walker propagated frame, so equating this to zero means that one plane is following the frame of gyroscopes\footnote{To fully align with the gyroscope frame one must additionally have $\Omega_2=\Omega_3=0$. As it is, however, this will automatically be fulfilled in the type V case.}. Note that there is still a $U(1)$ gauge freedom left: the initial configuration of the frame around the $G_2$ orthonormal axis.

\paragraph{$\Sigma_3$-gauge ($R_1=\sqrt{3}\Sigma_\times$):}In this gauge $\mathbf{\Sigma}_1$ is imposed to be purely real, so $\Re \{\mathbf{\Sigma}_{1}\}=0$. This gauge choice becomes natural in the analysis of type I in a $G_2$ frame aligned such that $\mathbf{V}_c=0$.

\paragraph{} The following sections seek to explore the implementation of the theory with certain Bianchi types. Analyzing all Bianchi types I-VII$_h$ will be too extensive for this paper and hence only some of the simplest (and important) types will be analyzed here. Among the class B types (nonzero $a$), the simplest extension to the commonly assumed open FLRW background is the type V. It borders to Bianchi type I, which in turn is the simplest extension of the flat FRW model and belong to the class A types ($a=0$). Some of the equilibrium points important in the  type V analysis will therefore prove to be of type I. Hence both of these types will be analyzed in the following sections where quilibrium points are found and the stability analyzed. 


\section{Dynamical system in Bianchi type I}
\label{sec:dyn1}

The Bianchi type I model is characterized by flat spatial sections:

\begin{equation}
\label{B1}
A=\mathbf{N}_{\Delta}=N_+=0\,.
\label{BianchiIspec}
\end{equation}
In the following, the spatial components $\{V_1, \mathbf V_c=V_2+i V_3\}$ associated with the field strength will be referred to as the \emph{spatial part}, or sometimes loosely as the \emph{vector}, and $\Theta$ as the \emph{temporal part}. 

Below some peculiarities of Bianchi type I are examinded in subsection \ref{sec:decoupling}. First it is shown that state space decouples into two branches. Thereafter the choice of frame is chosen, which is different in the two branches. In the next two subsections attention will be given to each of the two branches and self-similar solutions that correspond to equilibrium points in the dynamical system will be derived. Thereafter, an analysis of state space will be conducted in subsection \ref{subs:B1_class}, before strong global results are derived in subsection \ref{sec:global:BI}.

\subsection{Decoupling and choice of frame \label{sec:decoupling}}
From \eref{Constr3} and \eref{Constr4} subject to \eref{BianchiIspec} it is evident that the temporal and spatial components of the field strength decouple in Bianchi type I:
\begin{itemize}
	\item \underline{Perfect branch:} The field strength $j$-form is purely temporal, i.e. $\Theta$ is generally nonzero and $V_1=\mathbf{V}_c=0$. This corresponds to a 1-form  field strength constructed from a homogeneous massless scalar field.
	\item \underline{Vector branch:}  The field strength $j$-form is purely spatial, i.e. $V_1$ and $\mathbf{V}_c$ are generally non-zero and $\Theta=0$.  
\end{itemize}
In fact, the condition for a vanishing gauge field energy flux is that its field strength is purely spatial ($\Theta=0$) or purely temporal ($V_1=\mathbf V_c=0$) \cite{thorsrud17}, as seen from the two last equations in \eref{enMomTenPartsC}. The decoupling can therefore be interpreted as a physical restriction on the gauge field: energy flux on spatial sections of homogeneity is not possible in Bianchi type I. 

Because of the decoupling one obtain two different dynamical  systems, that require a separate analysis. The natural choice of frame is different in these two cases. In the case of Bianchi type I where the group $G_3$ is Abelian, there is no unique subgroup $G_2$. In this situation there are two natural possibilities for choice of frame: 

\begin{itemize}	
	\item \underline{Diagonal shear frame:} 
	In this case the tetrad is aligned with the shear eigenvectors, so that the shear tensor takes a diagonal form. In order to consider a general state space, this choice requires three independent spatial components of the field strength, since it will in general not be aligned with the eigenvectors of the shear tensor. The general Bianchi type I equations for this frame is given in \ref{App:diag}. These equations lead to dynamical systems that are effectively 3-dimensional and 5-dimensional in the perfect branch and the vector branch, respectively. This frame will be adopted only for the perfect branch, in which case the dynamical system can also be derived directly from \eref{FluidEqs}-\eref{JacId2}. The resulting dynamical system is presented in subsection \ref{ch:puretemporal}.  	
	\item \underline{Vector aligned frame:} In the vector branch the frame can be aligned with the field strength ``vector''. This ``vector aligned frame'' will be the choice of frame for the vector branch.  Specifically, the basis vector $\mathbf e_1$ will be aligned with the $j$-form so that $\mathbf V_c=0$.
	Note that the general equations \eref{FluidEqs}-\eref{JacId2} preserve the initial choice $\mathbf V_c=0$. In this case, it should not come as a surprise that restricting the analysis to an inertial frame would imply restrictions on state space. As will be seen later; with generic initial conditions the vector will rotate, so in this sense frame rotation is required to preserve the alignment with the vector. The resulting dynamical system is effectively 5-dimensional and will be presented in subsection \ref{vectorbranch}. 	 
\end{itemize}

\subsection{The perfect branch ($V_1=\mathbf{V}_c=0$) \label{ch:puretemporal}}
With $V_1=\mathbf{V}_{c}=0$ the diagonal shear frame equations \eref{FinalEqs1}-\eref{frameDef2} reduce to\setcounter{footnote}{0}\footnote{This can also be derived directly from \eref{FluidEqs}-\eref{JacId2} by inserting perfect branch / Bianchi type I specifications ($V_1=\mathbf V_c = N_+ = \mathbf N_\Delta = A =0$) and choosing an initial orientation of the tetrad so that the shear is diagonal. The diagonality is preserved by choosing $R_1=0$.}
\begin{eqnarray}
\label{FinalEqs1Temp}
&\Sigma_+'=-\left(2-q\right)\Sigma_+,\\
&\Sigma_-'=-\left(2-q\right)\Sigma_-,\\
&\Theta'=-(2-q)\Theta,\\
\label{FinalEqs4Temp}
&\Omega_{\rm pf}'=2\left(q+1-\frac{3}{2}\gamma, \right)\Omega_{\rm pf},
\end{eqnarray}
subject to one constraint
\begin{equation}
\Sigma_+^2+\Sigma_-^2+\Theta^2+\Omega_{\rm pf}=1.
\label{BI:con1}
\end{equation}
For this system
\begin{equation}
q=2(\Sigma_+^2+\Sigma_-^2)+(\frac{3}{2}\gamma -1)\Omega_{\rm pf}+2\Theta^2.
\end{equation}
It is straight forward to solve the equations \eref{FinalEqs1Temp}-\eref{FinalEqs4Temp} exactly. For $\left(\Sigma_+,\Sigma_-,\Theta,\Omega_{\rm pf}\right)$ the solution is 
\begin{equation}\fl
\label{SolB1}
\left(\Sigma_+,\Sigma_-,\Theta,\Omega_{\rm pf}\right)=\left(\frac{\Sigma_{+,0}e^{-2\tau}}{\theta},\frac{\Sigma_{-,0}e^{-2\tau}}{\theta},\frac{\Theta_0e^{-2\tau}}{\theta},\frac{\Omega_{\rm pf,0}e^{-(3\gamma-2)\tau}}{\theta^2}\right), 
\end{equation}
where 
\begin{equation}
\theta= \left[(\Sigma_{+,0}^2+\Sigma_{-,0}^2+\Theta_0^2)e^{-4\tau}+\Omega_{\rm pf,0}e^{-(3\gamma-2)\tau}\right]^{\frac 12},
\end{equation}
and the initial values $(\Sigma_{+,0},\Sigma_{-,0},\Theta_0,\Omega_{\rm pf,0})$ satisfy the constraint $\Sigma_{+,0}^2+\Sigma_{-,0}^2+\Theta_0^2+\Omega_{\rm pf,0}=1$. Note that $q\in[-1,2]$. Also note that these variables represent observables relative to an inertial tetrad, since the frame specifications \eref{frameDef1}-\eref{frameDef2} allowed us to set the frame rotation to zero, i.e. $\Omega_a=0$ for $a\in\{1,2,3\}$. The exact solutions thus determine the evolution completely. 
They admit four sets of self-similar solutions: \emph{flat FLRW}, \emph{Jacobs' Sphere}, \emph{Jacobs' Extended Disk} and \emph{Kasner}:
\begin{itemize}
	\item For $-1\leq\,q\,<\,2$ all observables except $\Omega_{\rm pf}$ vanish asymptotically as  $\tau\,\rightarrow\,\infty$. Since $\Omega'$ is a monotonic increasing function, the Hamiltonian constraint gives $\lim_{\tau\rightarrow\infty}\Omega=1$. Therefore, the flat FLRW universe is reached asymptotically.
	\item For $q=2$ all derivatives vanish except $\Omega_{\rm pf}'$. Invoking the Hamiltonian constraint there are two options: 
	\begin{equation}\fl
		\label{d}
		\Omega_{\rm pf}\,\neq\,0:\quad 1=\Sigma_+^2+\Sigma_-^2+\Theta^2+\Omega_{\rm pf} \phantom{000}\textrm{with}\phantom{000}\gamma =2\rightarrow\quad Jacobs'\;\; Sphere\,,
	\end{equation}
	or
	\begin{equation}\fl
		\label{k}
		\Omega_{\rm pf}\,=\,0:\quad 1=\Sigma_+^2+\Sigma_-^2+\Theta^2 \phantom{000}\textrm{with}\quad\gamma \,\in\,[0,2)\rightarrow\quad Jacobs'\;\; Ext. \;\;Disk.\,
	\end{equation}
	Note that in Jacobs' Extended Disk $\gamma$ is defined on a half-open interval, so that there is no overlap with Jacobs' sphere.
	\item Setting $\Theta=0$ in Jacobs' Sphere gives (the ordinary) \emph{Jacobs' disk}. Setting $\Theta=0$ in Jacobs' Ext. Disk  gives \textit{Kasner} vacuum solution.  
\end{itemize}
  
In table \ref{tab:FpB1Th} the above equilibrium sets are summarized. The names Jacob's Sphere and Jacob's Extended Disk are so chosen attempting at reflecting the mere extension that these equilibrium points represent of the already known Jacob's Disk.

\begin{table}[H]
	\centering
	\resizebox{\textwidth}{!}{\begin{tabular}{llccccc}
			\toprule
			\multicolumn{7}{c}{\textbf{Equilibrium sets in Bianchi type I with pure temporal field ($V_1=\mathbf{V}_c=0$)}} \\
			\hline
			Name (abbr.) & $q$ &$\gamma $&$\Omega_{\rm pf}$&$\Sigma_+$&$\Sigma_-$&$\Theta$ \\
			\hline
			flat FLRW & $\frac{3}{2}\gamma -1$ & $[0,2)$&1&0&0&0 \Tstrut\\ 
			Kasner (K.) & $2$ & $[0,2)$&0&$[-1, 1]$&$\pm \sqrt{1-\Sigma_+^2}$&0 \\
			Jacobs' Ext. D. (J.E.D.) &  $2$&$[0,2)$&0&$[-1,1]$&$[-1,1]$& $\pm \sqrt{1-\Sigma_+^2 - \Sigma_-^2}$ \\
			Jacobs' Sphere (J.S.) & $2$ & $2$ & $[0,1]$ & $[-1,1]$ & $[-1,1]$ &  $\pm \sqrt{1-\Sigma_+^2 - \Sigma_-^2-\Omega_{\rm pf}}$ \\ \hline
	\end{tabular}}
	\caption{Summary of equilibrium sets in Bianchi type I for the perfect branch $V_1=\mathbf{V}_c=0$. }
	\label{tab:FpB1Th}
\end{table}

\subsection{The vector branch ($\Theta=0$) \label{vectorbranch}}
Instead of the diagonal shear frame, a ``vector aligned frame'' will be used in this case. As mentioned above, in Bianchi type I there is no unique subgroup $G_2$ of the Abelian $G_3$. By convenience a frame aligned with the field strength ``vector'', i.e. $\mathbf{V}=V^1\mathbf{e}_1$, is chosen. The equations imply that $\mathbf{V}_c$ remains zero for all later times. The equations are obtained by inserting the type I specifications \eref{B1} into \eref{FluidEqs}-\eref{JacId2} and $\eref{Constr1}$-$\eref{Constr4}$. The frame rotations $R_2$ and $R_3$ are now given by \eref{rot}.\setcounter{footnote}{0}\footnote{The Jacobi Identity does not impose \eref{rot} in the type I case, and this choice is thus a choice of gauge.} Explicitly these equations are written down in \ref{App:G2B1}.  

$R_1$ is the normalized frame rotation around the $\mathbf{e}_1$ axis. Specifying to the $\Sigma_3$-gauge,
\begin{equation}
\label{S3}
R_1=\sqrt{3}\Sigma_\times,
\end{equation}
reduces $\Sigma_2$ to a monotone function, see \eref{heihei} in \ref{App:G2B1}. Thus, by the above gauge choice, $\Sigma_2$ must now remain zero if it initially vanishes: $\Sigma_2(\tau_0)=0$. This initial condition may be met by appropriately fixing the remaining freedom in the gauge (\ref{S3}): the angle $\phi$ which represents the freedom in orientation of the tetrad at a given instant. Now all the gauge freedom has been used, and, as seen below, one is left with a five-dimensional dynamical system. 

The resulting dynamical system expressed in six real variables is
\begin{eqnarray}
&\Sigma_+'=(q-2)\Sigma_++3\Sigma_3^2-2V_1^2, \label{B1_G2:first}\\
&\Sigma_-'=(q-2)\Sigma_-+\sqrt{3}(2\Sigma_\times^2-\Sigma_3^2),\\
&\Sigma_\times'=(q-2-2\sqrt{3}\Sigma_-)\Sigma_\times,\\
&\Sigma_3'=\left(q-2-3\Sigma_++\sqrt{3}\Sigma_-\right)\Sigma_3,\\
&V_1'=\left(q+2\Sigma_+\right)V_1,\\
&\Omega_{\rm pf}'=2\left(q+1-\frac{3}{2}\gamma \right)\Omega_{\rm pf}, \label{B1_G2:last}
\end{eqnarray}
subject to one constraint equation
\begin{eqnarray}
\label{HamB1V}
&1=\Omega_{\rm pf}+\Sigma_+^2+\Sigma_-^2+\Sigma_\times^2+\Sigma_3^2+V_1^2.
\end{eqnarray}
For this system
\begin{equation}
q=2(\Sigma_+^2+\Sigma_-^2+\Sigma_\times^2+\Sigma_3^2)+(\frac{3}{2}\gamma -1)\Omega_{\rm pf}.
\end{equation}

Thus the dynamical system in the vector branch has two more degrees of freedom than in the perfect branch. This reflects the fact that a rotating vector has three physical degrees of freedom, two more than the field strength in the perfect branch, where $\Theta$ is the only variable. Therefore it can be expected that the effective dynamical system with a \emph{non-rotating} vector is three-dimensional, like the perfect branch. This will be verified below, where ``vector rotation'' will be defined formally after paying closer attention to the gauge fixing.


The gauge fixing above, that resulted in the five-dimensional system (\ref{B1_G2:first})-(\ref{B1_G2:last}), assumes $\mathbf{\Sigma}_1=\Sigma_2+i\Sigma_3 \neq 0$. Note that the statement $\mathbf{\Sigma}_1 \neq 0$  is gauge independent according to (\ref{trans:1}) and preserved in time according to (\ref{heihei})-(\ref{heihei2}). In the case $\mathbf{\Sigma}_1=0$, there remains one gauge degree of freedom among the four remaining independent variables  in the system (\ref{B1_G2:first})-(\ref{B1_G2:last}). The redundant degree of freedom can be removed by choosing an initial orientation of the tetrad so that $\Sigma_\times = 0$. Thus the effective dynamical system is three-dimensional if $\Sigma_3=0$, as in the perfect branch.  In that case the shear is diagonal and hence the frame rotation is zero. One can conclude that the vector, which is aligned with the frame basis vector $\mathbf e_1$, is non-rotating if $\Sigma_3=0$. That is to say: the direction of the vector is stable as seen by a co-moving inertial observer. In the same way the vector is rotating if $\Sigma_3\neq 0$, since then there will be non-zero frame rotation with an axis of  rotation that is different from $\mathbf e_1$, i.e. not aligned with the field strength (according to \eref{rot}). A co-moving inertial observer will then see a rotating vector.  Note that the vector rotation is a rather unique phenomenon for Bianchi type I, since it is the only space-time that can accommodate a purely spatial field strength with three independent components \cite{thorsrud17}. In fact, as will be shown later, the vector rotation has a crucial role in determining the future asymptotic behavior for all solutions in Bianci type I. It is therefore worthwhile to formally define the vector rotation: 
{\dfn[Vector rotation]\label{def:rotation} The system (\ref{B1_G2:first})-(\ref{B1_G2:last}) with  $V_1\neq0$ possesses a rotating vector if $\Sigma_3\neq0$ and a non-rotating vector if $\Sigma_3=0$.}

In this branch one is thus left with a richer flora of options. Table \eref{tab:FpB1}  gives a summary of equilibrium points. Again, among the solutions are standard solutions without a vector; flat FLRW, Jacobs' Disk (J.D.) and Kasner (K.). Additionally one finds three solutions containing the gauge field:

\begin{itemize}
\item \emph{Wonderland} (W.) is an LRS solution containing both a non-rotating vector and the perfect fluid. The field strength is aligned with the LRS axis and the expansion asymmetry is prolate type. 
Its range of existence is the open interval $\gamma\in (2/3 ,2)$. It approaches the flat FLRW solution $\Omega_{\rm pf}$ when $\gamma\rightarrow 2/3$ and the Kasner solution ($\Sigma_+=-1$) when $\gamma\rightarrow 2$. Interestingly, it has a deceleration parameter $q=-1+3\gamma/2$ identical to the flat FLRW solution. The line element of Wonderland is
    \begin{equation}
    {\rm d}s^2=-{\rm d}t^2+t^{2}{\rm d}x^2+t^{\frac{2-\gamma}{\gamma}}({\rm d}y^2+{\rm d}z^2).
    \end{equation}
 
\item \emph{The Rope} (R.) contains a rotating vector and the perfect fluid.  Its range of existence is the open interval $\gamma\in(6/5, 4/3)$. It approaches Wonderland in the limit $\gamma\rightarrow 6/5$ and the Edge in the limit  $\gamma\rightarrow 4/3$. Like Wonderland, it has a deceleration parameter $q=-1+3\gamma/2$ identical to the flat FLRW solution. In the considered $\gamma$ range the shear tensor has three distinct real eigenvalues, but in the limit $\gamma\rightarrow 6/5$ two of them become identical. Thus the Edge is not an LRS solution, although it is ``almost LRS'' close to Wonderland.  
The line element of the Rope is
    \begin{equation}\fl
    {\rm d}s^2=-{\rm d}t^2+t^2\left({\rm d}x+ \sqrt{\frac{2(5\gamma-6) }{(2-\gamma)}}t^{1-\frac{2}{\gamma}}{\rm d}z\right)^2+t^{\frac{2(4-3\gamma)}{\gamma}}{\rm d}y^2+t^{\frac{4(\gamma-1)}{\gamma}}{\rm d}z^2.
    \end{equation}
\item \emph{The Edge} (E.) contains only a rotating vector and has deceleration parameter $q=1$, similar to a radiation dominated universe. Since $\,\Omega_{{\rm pf}}=0$, it exists in the entire range of models, $\gamma\in[0,2]$. The (normalized) shear tensor has three distinct eigenvalues $\{ 1, \pm \sqrt{2\mp\sqrt{3}} \}$, so there is no plane of expansion symmetry. 
The line element of the Edge is
    \begin{equation}
    {\rm d}s^2=-{\rm d}t^2+t^2\left({\rm d}x+\sqrt{2}t^{-1/2}{\rm d}z\right)^2+{\rm d}y^2+t{\rm d}z^2.
    \end{equation}

\end{itemize}

\begin{table}[H]
	\centering
	\resizebox{\textwidth}{!}{\begin{tabular}{lcccccccc}
			\toprule
			\multicolumn{9}{c}{\textbf{Equilibrium sets in Bianchi type I with pure spatial field} ($\Theta=0$)} \\
			\hline
			Name (abbr.) & $q$ & $\gamma $&$\Omega_{\rm pf}$&$\Sigma_+$&$\Sigma_-$&$\Sigma_\times$&$\Sigma_3$&$V_1$\\
			\hline
			Wonderland (W.) & $-1+\frac{3}{2}\gamma $ & $(\frac{2}{3},2)$&$\frac{3}{2}-\frac{3}{4}\gamma $&$\frac{1}{2}-\frac{3}{4}\gamma $&0&0&0& $\frac{3}{4}\sqrt{(\gamma -2)(2-\gamma )}$ \Tstrut \\
			the Rope (R.) & $-1+\frac{3}{2}\gamma $ & $(\frac{6}{5},\frac{4}{3})$&$6-\frac{9}{2}\gamma$&$\frac{1}{2}-\frac{3}{4}\gamma$&$\frac{\sqrt{3}}{4}(6-5\gamma)$&0&$\pm\frac{1}{2}\sqrt{\frac{15}{2}\left(2-\gamma\right)\left(\gamma-\frac{6}{5}\right)}$&$\frac{3}{2}\sqrt{\frac{3}{2}\left(2-\gamma\right)\left(\gamma-\frac{10}{9}\right)}$\\
			the Edge (E.) & $1$& $[0,2]$&0&$-\frac{1}{2}$&$-\frac{1}{2\sqrt{3}}$&0&$\pm\frac{1}{\sqrt{6}}$&$\frac{1}{\sqrt{2}}$\\
			flat FLRW & $-1+\frac{3}{2}\gamma $ & $[0,2)$ &1&0&0&0&0&0\\
			Kasner (K.) & $2$ & $[0,2)$&0&$[-1,1]$& $\pm \sqrt{1-\Sigma_+^2}$ &0&0&0\\
			Jacobs' Disk (J.D.) & $2$ & $2$ &$[0,1]$& $[-1,1]$ & $\pm \sqrt{1-\Omega_{\rm pf}-\Sigma_+^2}$ & 0 & 0& 0 \\
			\hline
	\end{tabular}}
	\caption{Summary of equilibrium points for the branch $\Theta=0$.}
	\label{tab:FpB1}
\end{table}

\subsection{Analysis \label{subs:B1_class}}
In this section the analysis of behaviour for the range of models with $\gamma\in[0,2)$ will be conducted primarily by a local approach, but generalizations to global results established in section \ref{sec:global:BI} will be commented in order to draw the full picture.\setcounter{footnote}{0}\footnote{Note that the considered gamma range is half-open, which excludes Jacobs' Sphere and Jacobs' Disk from the analysis. The particular cases of a stiff fluid ($\gamma=2$) can easily be included in the analysis, using the same methods as below to deal with zero eigenvalues,  but is avoided due to its limited relevance.} The case of a non-rotating vector will be considered before turning the attention to the general, five-dimensional, model.  

\paragraph{A non-rotating vector}
After proper gauge-fixing one can set $\Sigma_3=\Sigma_\times=0$ in the system (\ref{B1_G2:first})-(\ref{B1_G2:last}) in the case of a non-rotating vector, as shown above. In that case the variable $\Sigma_-$ decays monotonically and is thus irrelevant at late times.\footnote{At early times $q\rightarrow 2$ since $\Sigma_-$ is bounded, and thus Kasner is reached asymptotically in the past if $\gamma<2$ or Jacobs' Disk if $\gamma=2$.} The overall picture is clearly the same if one also set $\Sigma_-=0$, i.e. consider the invariant subsystem $\Sigma_-=\Sigma_\times=\Sigma_3=0$. In this case the only shear-variable is $\Sigma_+$ so the subsystem is LRS with an expansion isotropy in the plane orthogonal to the vector. Note that since the subsystem is two-dimensional and invariant, the entire phase flow can be visualized in the plane, as in figure \ref{Fig:phaseportraitBI} for the dust case $\gamma=1$.
\begin{figure}[t!]
	\centering
	\includegraphics[width=0.8\textwidth]{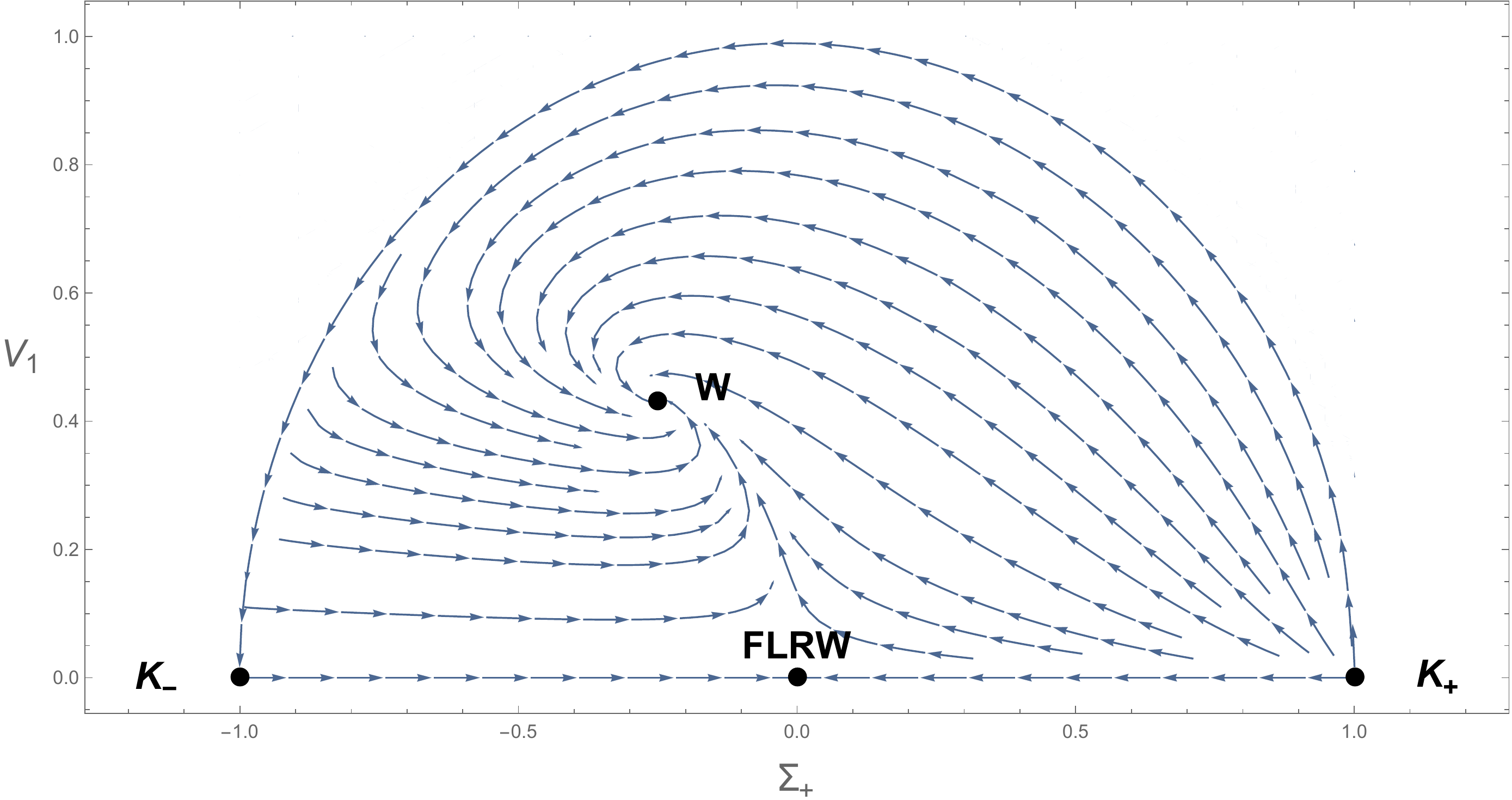}\caption{Phase flow in Bianchi type I for the LRS subsystem $\Sigma_-=\Sigma_\times=\Sigma_3=0$ with $\gamma=1$. Here ``W'' denotes Wonderland and $K_\pm$ denotes Kasner solutions with $\Sigma_+= \pm1$.}
	\label{Fig:phaseportraitBI}
\end{figure}

The equilibrium points and stability can be summarized as follows:
\begin{itemize}
	\item \emph{flat FLRW} has eigenvalues $\{\frac{3}{2} (\gamma-2), -1+\frac{3}{2} \gamma \}$ and is thus stable for $\gamma<2/3$.
	\item \emph{Wonderland} has eigenvalues $-\frac{3}{4}(2-\gamma)(1\pm\sqrt{5-6\gamma})$, whose real parts are negative in the entire range $\gamma\in(2/3, 2)$. Thus Wonderland is an attractor when it exists.
	\item The points $K_\pm$ are the LRS points on the Kasner circle where $\Sigma_+=\pm1$. The eigenvalues of $K_+$ and $K_-$ are $\{4, 6-3\gamma\}$ and $\{0, 6-3\gamma\}$, respectively. The zero eigenvalue of $K_-$ corresponds to a perturbation in the direction of $V_1$. This direction can easily be verified to be stable, so $K_-$ is a saddle. $K_+$ is past stable and the unique repeller of the subsystem.  
\end{itemize}
At the point $\gamma=2/3$ a bifurcation occurs, where Wonderland branches off from the flat FLRW solution and the stability is exchanged, see figure \ref{Fig:stability_BI_nonrotating}. Note that Wonderland is the unique attractor for its full existence range $\gamma\in (\frac{2}{3},2)$ in the presence of a \emph{non-rotating} vector. A stronger global version of this result is inferred from theorem \ref{thm:anisotropic:BI} in the next section. 

\paragraph{General system}
For the general five-dimensional dynamical system the situation is richer. Two more bifurcations in addition to the one at $\gamma=2/3$ occur. First at $\gamma=6/5$, where the Rope branches off Wonderland, and next at $\gamma=4/3$, where the Rope connects with the Edge. The eigenvalues of the linearization matrix around each equilibrium point are summarized in tables  \ref{tab:EigVal_BI_temp} and \ref{tab:EigVal_BI_spatial} for the temporal branch and the spatial branch, respectively. The attractors are determined at linear level for all $\gamma$ values except the bifurcation points and the interval $(\frac{4}{3}, 2]$, where zero eigenvalues are present. For these $\gamma$ values the monotone functions introduced in section \ref{sec:global:BI} has been employed to determine the stability.

All conclusions based on local stability analysis and monotone functions are summarized in table \ref{tab:StabB1}. The diagram in figure \ref{Fig:stability_BI} summarizes the stability at late times. As seen in the figure the attractors are linked in a chain on the gamma range, starting with flat FLRW ($\gamma\in[0,\frac{2}{3}]$), followed by Wonderland ($\gamma\in(\frac{2}{3},\frac{6}{5}]$), the Rope ($\gamma\in(\frac{6}{5},\frac{4}{3})$) and finally the Edge ($\gamma\in[\frac{4}{3},2]$). Thus a unique attractor is identified for each value of $\gamma$. Each of them will be shown to be a \emph{global} attractor in the following subsection. 

Note that for Wonderland the eigenvalue $l_3$ in table \ref{tab:EigVal_BI_spatial}, which becomes positive at the transcritical bifurcation point $\gamma=6/5$, is associated with an eigenvector in the $\Sigma_3$ direction. This explains why the instability of Wonderland for $\gamma\in(\frac{6}{5}, 2)$ is suppressed in the special case of a non-rotating vector ($\Sigma_3=0$). Generally though the vector is expected to rotate for $\gamma>6/5$, since $\Sigma_3\neq0$ in the Rope and the Edge. 

For the past stability qualitative insight can be obtained directly by applying the monotone functions collected in \ref{App:Zs}. For example, the function $Z_1$ is given by
\begin{eqnarray}
Z_1&=\frac{\Sigma_\times\Sigma_3^2V_1^3}{\Omega^3_{\rm pf}}, \qquad Z_1'=3(3\gamma-4)Z_1. 
\label{z1}
\end{eqnarray}
For $\gamma<4/3$, $Z_1$ is monotonically decreasing, and since all variables are bounded, $\Omega_{\rm pf}\rightarrow 0$ at early times. Comparing with table \ref{tab:FpB1} one must conclude that for the vector branch ($\Theta=0$), the only fix point candidate for a past attractor is Kasner. Under the weak assumption that the system is not chaotic at early times, one is lead to conclude that points, or domains, on the Kasner circle serve as global repellers of the system, for the wide range $\gamma<4/3$. To determine exactly which parts of the Kasner circle that has this role the attention is next turned to the local analysis. 


At first sight the past stability analysis is complicated by the presence of multiple zero eigenvalues in the relevant equlibrium points. However, several of these merely reflects the dimensionality of the equilibrium sets. For instance Jacobs' Extended Disk (J.E.D.) is a two-parameter family of fixed points and thus the two zero eigenvalues in table \ref{tab:EigVal_BI_temp} merely reflect perturbations along the equilibrium set itself. The Kasner circle is the subset $\Theta=0$ of J.E.D. and thus there are two zero eigenvalues in table \ref{tab:EigVal_BI_temp} for the temporal branch and one zero eigenvalue in table \ref{tab:EigVal_BI_spatial} for the spatial branch. 
Since it is natural and more useful to classify the stability of the equlibrium set as a whole, rather than for its individual points, the zero eigenvalues of Kasner and J.E.D. in table \ref{tab:EigVal_BI_temp} are safely ignored. With these prescriptions the stability of Kasner and J.E.D. ($\Theta\neq0$) is given in table \ref{tab:StabB1}. Note that J.E.D. with $\Theta\neq0$ is a repeller for all values of $\gamma<2$. For the vector branch it is only the subset $\Theta=0$ of J.E.D. that is dynamically relevant. Since Kasner is on the boundary between the temporal and spatial branches, the union of the eigenvalues in tables \ref{tab:EigVal_BI_temp} and \ref{tab:EigVal_BI_spatial} have been used in the classification of stability to account for all directions in Bianchi type I. The Kasner circle is divided into well defined repeller and saddle domains.\setcounter{footnote}{0}\footnote{At the Kasner points $\theta=\pi$ and $\theta=4\pi/3$ there are additional zero eigenvalues not accounted for by the dimensionality of the equilibrium set. However, the former point is a saddle in the LRS subsystem and thus also in the general system, whereas the latter point remains inconclusive at linear order.} In polar coordinates $(\Sigma_+, \Sigma_-)=(\cos \theta, \sin \theta)$ the part $\theta \in (\pi, \frac{4\pi}{3})$ is past stable, the point $\theta=4\pi/3$ is inconclusive  and the rest of the Kasner circle are saddle points. 

\begin{figure}[t!]
	\centering	
	\begin{overpic}[width=0.9\textwidth,tics=10]{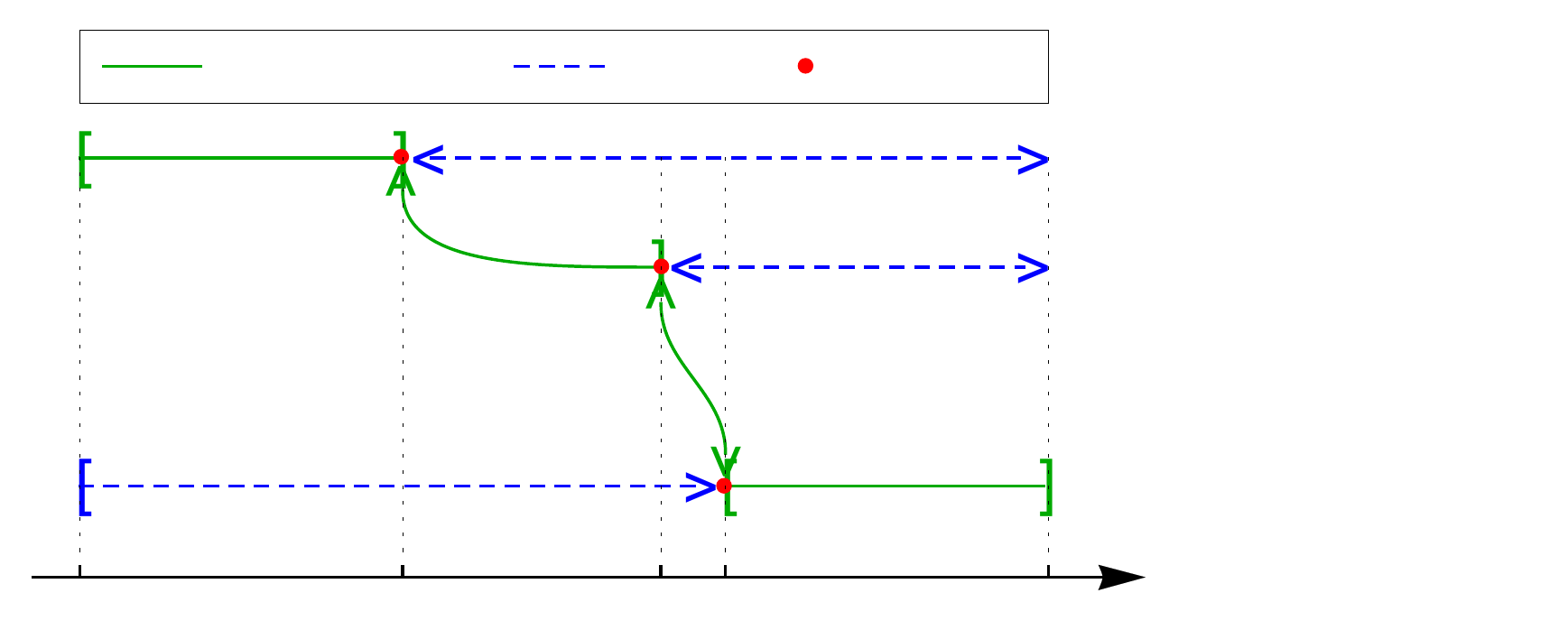}		
		\put (14,35) {\scriptsize global attractor}
		\put (40,35) {\scriptsize saddle}
		\put (53,35) {\scriptsize bifurcation}		
		\put (70,29) {\footnotesize flat FLRW, $\displaystyle \gamma\in[0,2)$}
		\put (70,22) {\footnotesize Wonderland (W.), $\displaystyle \gamma\in\left(2/3,2\right)$}
		\put (70,16) {\footnotesize the Rope (R.), $\displaystyle \gamma\in\left(6/5,4/3\right)$}		
		\put (70,9) {\footnotesize the Edge (E.), $\displaystyle \gamma\in\left[0, 2\right]$}	
		\put (74,3) {\large $\displaystyle \gamma$ }	
		\put (3,0) { \scriptsize $\displaystyle 0$ }
		\put (25,-1) {\scriptsize $\displaystyle \frac{2}{3}$ }	
        \put (41,-1) {\scriptsize $\displaystyle \frac{6}{5}$ }	
		\put (45,-1) {\scriptsize $\displaystyle \frac{4}{3}$ }
		\put (65,0) { \scriptsize $\displaystyle 2$ }		
	\end{overpic}	
	\caption{Stability diagram for Bianchi type I. The stability of flat FLRW, W., R. and E. in their full $\gamma$-ranges are indicated by solid green lines (global attractor), dashed blue lines (saddle). Red dots represent bifurcation points where the stability is exchanged.  Closed and open intervals are indicated using $[$, $]$ and $<$, $>$, respectively.  } 
	\label{Fig:stability_BI}
\end{figure}

\begin{figure}[t!]
	\centering	
	\begin{overpic}[width=0.9\textwidth,tics=10]{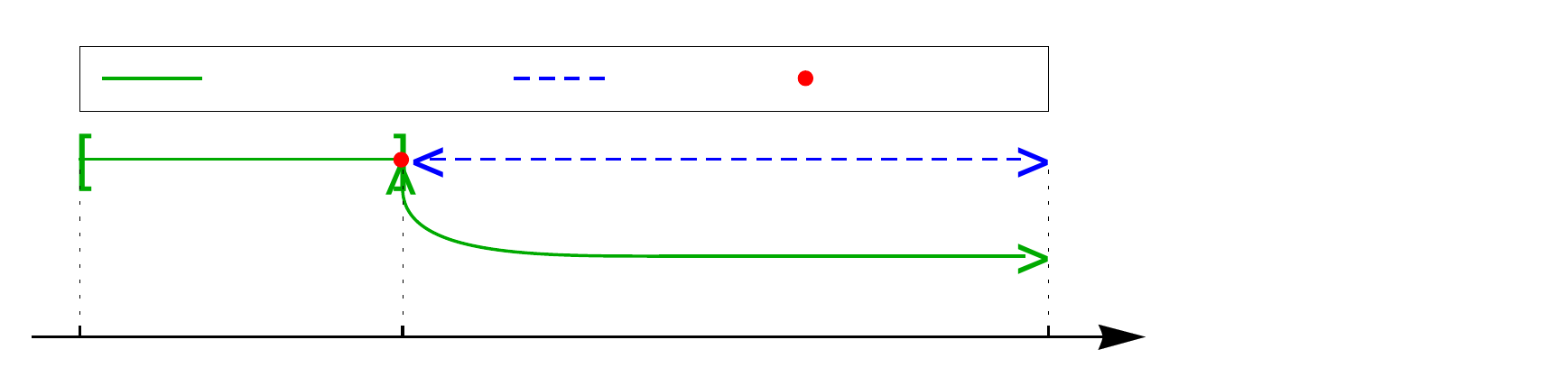}		
		\put (14,19) {\scriptsize global attractor}
		\put (40,19) {\scriptsize saddle}
		\put (53,19) {\scriptsize bifurcation}		
		\put (70,13) {\footnotesize flat FLRW, $\displaystyle \gamma\in[0,2)$}
		\put (70,7) {\footnotesize Wonderland (W.), $\displaystyle \gamma\in\left(2/3,2\right)$}	
		\put (74,3) {\large $\displaystyle \gamma$ }	
		\put (3,0) { \scriptsize $\displaystyle 0$ }
		\put (25,-1) {\scriptsize $\displaystyle \frac{2}{3}$ }		
		\put (65,0) { \scriptsize $\displaystyle 2$ }		
	\end{overpic}	
	\caption{Stability diagram for the invariant subspace $\Sigma_3=0$ of Bianchi type I. See caption of figure \ref{Fig:stability_BI} for further details.} 
	\label{Fig:stability_BI_nonrotating}
\end{figure}

\begin{table}[H]
	\centering
	\resizebox{\textwidth}{!}{\begin{tabular}{llllclclclc}
			\toprule
			\multicolumn{9}{c}{\textbf{Classification of equilibrium sets in Bianchi type I}} \\
			\hline
			Name (abbr.)  && Existence  && Attractor && Saddle && Repeller  \\
			\hline
			Wonderland (W.) && $\gamma \in (\frac{2}{3},2)$ &&$\gamma\,\in\,(\frac{2}{3},\frac{6}{5}]$&&$\gamma\,\in\,(\frac{6}{5},2)$&& \Tstrut \\
			the Rope (R.)&&$\gamma \in (\frac{6}{5},\frac{4}{3})$ &&$\gamma\,\in\,(\frac{6}{5},\frac{4}{3})$&& &&\\
			the Edge (E.)&&$\gamma \in [0,2]$ && $\gamma\,\in\,[\frac{4}{3}, 2]$ &&$\gamma\,\in\,[0,\frac{4}{3})$&& \\
			flat FLRW&&$\gamma \in [0,2)$ && $\gamma\,\in\,[0,\frac{2}{3}]$ &&$\gamma\,\in\,(\frac{2}{3},2)$&& \\
			Kasner (K.) &&$\gamma \in [0,2) $ && && $\Sigma_+>\frac{\Sigma_-}{\sqrt{3}}\,\cup\,\Sigma_->0$ && $-1<\Sigma_+<\frac{\Sigma_-}{\sqrt{3}}<0$ \\
			J.E.D. with $\Theta\neq0$ &&$\gamma \in [0,2) $ && &&  && $\forall$  
	\end{tabular}}
	\caption{The existence domain of each equilibrium set is divided into attractor, saddle and repeller subdomains. Results are obtained using monotone functions as well as standard local analysis. }
\label{tab:StabB1}
\end{table}


\subsection{Global attractors and anisotropic hair theorems \label{sec:global:BI}}

Note that the presence of a unique attractor point in state space does not, by itself, imply that the corresponding self-similar solution is the global future asymptote. The reason, of course, is that instead of ending in a point, there is the possibility that the orbit ends in a closed loop.  Monotone functions provide a powerful tool for identifying \emph{global} attractors. The reason is the invariance of initial conditions; their absolute value is non-decreasing, or non-increasing, in all of state space. In fact, the set (\ref{B1_G2:first})-(\ref{B1_G2:last}) possesses a number of monotone functions, summarized in \ref{App:Zs}. Equipped with these functions the locally stable equilibrium points discussed above, i.e. flat FLRW, Wonderland, the Rope and the Edge, will be identified as global attractors. Furthermore, this also gives the stability at the bifurcation points and $\gamma>4/3$, where the linear approach is inconclusive, without the need of centre manifold analyses. The main consequences of the monotone functions are summarized in the following theorems. 

The first monotone function to be considered is $\Omega_{\rm pf}$, which is monotonically increasing in the closed interval $\gamma\in[0, 2/3]$, according to (\ref{JacId2}). This is the case in all Bianchi type I-VII$_{h}$ models, which led to the no-hair theorem \ref{NoHair} for the half-open interval $\gamma\in[0, 2/3)$. It is easy to find counter examples to the no-hair theorem for $\gamma=2/3$; in that case the open FLRW universe with $\Omega_{\rm pf}\in(0,1)$ is a self-similar solution in Bianchi type V.  Because of its special role as a bifurcation point in the Bianchi type I model, it is worth to note that the no-hair theorem can be extended to include the point $\gamma=2/3$ for this particular space-time:
 
\begin{thm}[No-hair theorem for  $0\le\gamma\le2/3 $]
	\label{th:BI1}
	Bianchi type I with a $j$-form, a perfect fluid $\Omega_{\rm pf}$ with equation of state parameter $0\leq\,\gamma\,\leq\,2/3$ will be asymptotically flat FLRW with $\Omega_{\rm pf}=1$ and $q=\frac{3}{2}\gamma-1\,\le\,0$.
\end{thm}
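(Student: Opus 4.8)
The plan is to reduce to the boundary value $\gamma=2/3$ and treat the two decoupled branches of Bianchi type I separately. For $0\le\gamma<2/3$ the statement is precisely Theorem~\ref{NoHair}, so all that remains is the asymptotic behaviour when $\gamma=2/3$. Setting $\gamma=2/3$ and $\Omega_\Lambda=0$ in \eref{q} gives $q=2\Sigma^2+2\Theta^2\ge0$, so by \eref{EnCons} one has $\Omega_{\rm pf}'=2\left(q+1-\frac{3}{2}\gamma\right)\Omega_{\rm pf}=2q\,\Omega_{\rm pf}\ge0$. Hence $\Omega_{\rm pf}$ is monotone non-decreasing and, by the Hamiltonian constraint \eref{Constr2}, bounded above by $1$; assuming $\Omega_{\rm pf}>0$ (otherwise there is no perfect fluid) it follows that $\Omega_{\rm pf}>0$ for all $\tau$ and $\Omega_{\rm pf}\to\Omega_\infty$ for some $\Omega_\infty\in(0,1]$. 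By the decoupling established in subsection~\ref{sec:decoupling}, any Bianchi type I orbit lies either in the perfect branch ($V_1=\mathbf V_c=0$) or in the vector branch ($\Theta=0$), and I treat these in turn.

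In the perfect branch the dynamics is governed by \eref{FinalEqs1Temp}--\eref{FinalEqs4Temp} and the exact solution \eref{SolB1} is available. With $\gamma=2/3$ one has $3\gamma-2=0$, so $\theta^2=(\Sigma_{+,0}^2+\Sigma_{-,0}^2+\Theta_0^2)e^{-4\tau}+\Omega_{\rm pf,0}$, which stays bounded away from zero as $\tau\to\infty$ since $\Omega_{\rm pf,0}>0$. Reading off \eref{SolB1}, the numerators of $\Sigma_+,\Sigma_-,\Theta$ are proportional to $e^{-2\tau}\to0$, hence $\Sigma_+,\Sigma_-,\Theta\to0$, while $\Omega_{\rm pf}\to\Omega_{\rm pf,0}/\Omega_{\rm pf,0}=1$. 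Consequently $q=2(\Sigma_+^2+\Sigma_-^2)+2\Theta^2\to0=\frac{3}{2}\gamma-1$, which is the asserted flat FLRW limit.

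In the vector branch I use the vector-aligned frame, so that $\mathbf V_c=0$ is preserved and the dynamics reduces to \eref{B1_G2:first}--\eref{B1_G2:last} with $q=2(\Sigma_+^2+\Sigma_-^2+\Sigma_\times^2+\Sigma_3^2)$ on the compact invariant state space cut out by \eref{HamB1V}. Every orbit therefore has a non-empty, compact, invariant $\omega$-limit set, and since $\Omega_{\rm pf}$ is monotone along the flow it is constant on that set; hence $\Omega_{\rm pf}'=2q\,\Omega_{\rm pf}=0$ there, and, as $\Omega_{\rm pf}=\Omega_\infty>0$, we get $q=0$, i.e. $\Sigma_+=\Sigma_-=\Sigma_\times=\Sigma_3=0$, on the $\omega$-limit set. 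On any orbit contained in $\{q=0\}$ one has $\Sigma_+\equiv0$, hence $0=\Sigma_+'=(q-2)\Sigma_++3\Sigma_3^2-2V_1^2=-2V_1^2$, forcing $V_1=0$; the constraint \eref{HamB1V} then gives $\Omega_{\rm pf}=1$. Thus the only flow-invariant subset of $\{q=0\}$ is the flat FLRW point, so the $\omega$-limit set equals $\{\text{flat FLRW}\}$ and the orbit converges to it with $q\to0=\frac{3}{2}\gamma-1$. Combining the two branches with Theorem~\ref{NoHair} proves the statement.

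The main obstacle is exactly the vector branch at the bifurcation value $\gamma=2/3$: there the linearization of flat FLRW has a vanishing eigenvalue (this is precisely why $\gamma=2/3$ is a bifurcation point in Bianchi type I), so a local analysis is inconclusive. The fix is to upgrade to a global argument via the monotonicity of $\Omega_{\rm pf}$ together with LaSalle's invariance principle, the delicate point being to check that the largest flow-invariant subset of $\{q=0\}$ contains no stray solution with $V_1\neq0$ — which is settled by evaluating the $\Sigma_+$ evolution equation on that set. (Alternatively one can run Barbalat's lemma on $q$, using boundedness of the right-hand sides, to conclude $q\to0$, then argue $V_1\to0$ the same way.)
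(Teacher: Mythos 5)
Your proposal is correct and follows essentially the same route as the paper's proof: the case $0\le\gamma<2/3$ is delegated to the second no-hair theorem, the boundary case $\gamma=2/3$ is handled via the monotonicity $(\log\Omega_{\rm pf})'=2q\ge0$ on the two decoupled branches, and in the vector branch the residual possibility $V_1\neq0$ is excluded by evaluating the $\Sigma_+$ evolution equation, exactly as in the paper. Your only deviations are presentational refinements — using the exact solution \eref{SolB1} in the perfect branch instead of the $q\to0$ plus constraint argument, and phrasing the vector-branch step as a LaSalle/$\omega$-limit-set argument rather than the paper's direct contradiction — which, if anything, make the step from ``$\Omega_{\rm pf}$ monotone and bounded'' to ``$q\to 0$'' more rigorous than the paper's wording.
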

\begin{proof}
	The half-open range $\gamma\in[0, 2/3)$ is the Bianchi type I special case of the no-hair theorem \ref{NoHair}. In the case $\gamma=2/3$ the equation of motion for $\Omega_{\rm pf}$ can be written 
	\begin{equation}
	(\log\Omega_{\rm pf})' = 2 q \ge 0.
	\end{equation}
	Note that $q\ge0$ in both branches of Bianchi type I. Since $\Omega_{\rm pf}$ is monotonically increasing and bounded it follows that 
	\begin{equation}
	\lim_{\tau\rightarrow\infty} q=0
	\label{heisann1}
	\end{equation}
	In the perfect branch one has $q=2(\Sigma_+^2+\Sigma_-^2+\Theta^2)$ so that (\ref{heisann1}) and the constraint (\ref{BI:con1}) directly gives
	\begin{equation}
	\lim_{\tau\rightarrow\infty} \Omega_{\rm pf}=1.
	\end{equation}  
	In the vector branch one has $q=2\Sigma^2 = 2(\Sigma_+^2+\Sigma_-^2+\Sigma_\times^2+\Sigma_3^2)$ so that $\Omega_{\rm pf}+V_1^2=1$ at late times according to (\ref{heisann1}) and the constraint (\ref{HamB1V}). If $\lim_{\tau\rightarrow\infty} V_1^2>0$ (\ref{B1_G2:first}) gives $\lim_{\tau\rightarrow\,\infty}\left(\Sigma_+'\right)\neq 0$ which contradicts the fact that $\Sigma^2=0$ according to (\ref{heisann1}). The only possibility is therefore again $\Omega_{\rm pf}=1$ at late times.	 
\end{proof}
As for the case $2/3<\gamma<2$ there are  anisotropic attractors and  a number of monotone functions that determine the global behaviour at late times, see \ref{App:Zs}. Some general observations for these functions are that they are of the form $Z=x^ay^bz^c/\phi^d$, for positive (or zero) exponents $a, b, c, d$. In addition, $Z'>0$, unless a number of the variables are equal to their fixed-point values. Furthermore, for the corresponding fix points, the variables $x,y,z$ in $Z$ are non-zero. This means that unless $x\equiv 0$ etc., $x$ cannot be zero at any time later, nor can $x\rightarrow 0$ since all variables $x,y,z$ are bounded and $\phi$ cannot be zero (see \cite{Hervik:2010uh}). Let $(V_1)_0$ denote the value of $V_1$ at the equilibrium point. The above observations imply  the following result: 
\begin{thm}[Anisotropic hairs for $2/3<\gamma<2$]
Assume that $\Omega_{{\rm pf}}, V_1>0$ and $2/3<\gamma<2$. Then a Bianchi type I with a spatial $j$-form will be asymptotically anisotropic, with $V_1\rightarrow (V_1)_0\neq 0$ at late times. More specifically, if: 
\begin{itemize}
\item{} $2/3<\gamma<2$ and $\Sigma_3=0$, then it will  asymptotically approach \emph{Wonderland}; 
\item{} $2/3<\gamma\leq 6/5$ and $\Sigma_3\neq 0$, then it will  asymptotically approach \emph{Wonderland}; 
\item{} $6/5<\gamma< 4/3$ and $\Sigma_3\neq 0$, then it will  asymptotically approach \emph{the Rope}; or
\item{} $4/3\leq \gamma< 2$ and $\Sigma_3\neq 0$, then it will asymptotically approach \emph{the Edge}.
\end{itemize}
\label{thm:anisotropic:BI} 
 \end{thm}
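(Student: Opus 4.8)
The proof rests on an analysis of the $\omega$-limit sets of the vector-branch system \eref{B1_G2:first}--\eref{B1_G2:last}, which evolves on the compact surface cut out by the Hamiltonian constraint \eref{HamB1V}. Compactness already guarantees that every orbit possesses a non-empty, compact, connected and invariant $\omega$-limit set $\mathcal{A}$, so the whole problem reduces to locating $\mathcal{A}$. Two preliminary observations are used throughout. First, each of $V_1$, $\Sigma_3$, $\Sigma_\times$ and $\Omega_{\rm pf}$ satisfies a linear homogeneous equation in itself and hence never changes sign; in particular $V_1>0$ for all $\tau$. Second, $V_1$ is moreover bounded away from zero: this follows from a monotone function of \ref{App:Zs} in which $V_1$ enters the numerator with a positive power, together with the boundedness of the state space and the fact that the denominator of that function stays bounded away from zero (cf.\ \cite{Hervik:2010uh}). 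Consequently none of the $V_1=0$ equilibrium sets of Table \ref{tab:FpB1} (flat FLRW, Kasner, Jacobs' Disk, J.E.D.) can belong to $\mathcal{A}$, so the model does not isotropise; the stronger statement $V_1\to(V_1)_0\neq0$ will then be automatic once $\mathcal{A}$ has been shown to collapse onto one of Wonderland, the Rope or the Edge, the only fixed points carrying $V_1\neq0$.

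The plan is to treat the non-rotating case $\Sigma_3=0$ separately, since it already proves the first bullet on the whole interval $2/3<\gamma<2$. After the gauge fixing $\Sigma_\times=0$ used in Section \ref{subs:B1_class}, the equation $\Sigma_-'=(q-2)\Sigma_-$ shows that $\Sigma_-\to0$ (because $q<2$ whenever $V_1>0$), so $\mathcal{A}$ lies in the two-dimensional invariant set $\Sigma_-=\Sigma_\times=\Sigma_3=0$. On that set the flow is planar (Figure \ref{Fig:phaseportraitBI}): ruling out periodic orbits --- by a Dulac-type criterion or by one of the monotone functions of \ref{App:Zs} --- and combining this with the local stability of Wonderland, whose eigenvalues $-\frac{3}{4}(2-\gamma)\bigl(1\pm\sqrt{5-6\gamma}\,\bigr)$ have negative real part throughout $(2/3,2)$, and with the saddle/repeller character of $K_\pm$ and of flat FLRW for $\gamma>2/3$, forces $\mathcal{A}=\{\mathrm{Wonderland}\}$.

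For the rotating case $\Sigma_3\neq0$ the plan is to push the same invariance argument through with the full battery of monotone functions of \ref{App:Zs}, chosen appropriately for each $\gamma$-subinterval. Each such $Z$ is a quotient of monomials in the state variables, is bounded on the state space, and obeys $Z'\geq0$ (or $\leq0$) with equality only on the locus where a prescribed subset of the variables sits at its equilibrium value; hence $Z$ is constant on $\mathcal{A}$ and $\mathcal{A}\subset\{Z'=0\}$. Intersecting these vanishing loci with \eref{HamB1V}, with the sign constraints, and with the relevant lower-dimensional invariant sets, and using the bifurcation structure recorded in Section \ref{subs:B1_class} --- the bracket multiplying $\Sigma_3$ in its own evolution equation vanishes, when evaluated at Wonderland, precisely at $\gamma=6/5$, and the analogous threshold attached to the Rope sits at $\gamma=4/3$ --- one reads off that $\mathcal{A}$ contracts to Wonderland (with $\Sigma_3\to0$) for $2/3<\gamma\leq6/5$, to the Rope for $6/5<\gamma<4/3$, and to the Edge for $4/3\leq\gamma<2$. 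The linearisation data of Table \ref{tab:EigVal_BI_spatial} confirm that these equilibria are attractors away from the bifurcation values, while the monotone functions additionally dispose of the bifurcation points $\gamma\in\{6/5,4/3\}$ and of the interval $(4/3,2)$ where the spectrum is degenerate, so that no centre-manifold computation is needed; this is also what fills the remaining gaps in Table \ref{tab:StabB1} and Figure \ref{Fig:stability_BI}.

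The hard part is the bookkeeping in this last step: for each of the three $\gamma$-windows one must verify that the common zero set of \emph{all} the relevant monotone functions of \ref{App:Zs} --- restricted to the constraint \eref{HamB1V}, to $V_1>0$, and to the sign and invariance conditions --- is exactly the single equilibrium point asserted, and in particular contains no periodic orbit and no residual higher-dimensional invariant piece (for instance an arc of the Kasner circle). This is where the precise algebraic form of the functions $Z_i$, and relations of the type \eref{z1}, carry the argument; everything else is compactness, sign preservation, and the local spectral information already tabulated.
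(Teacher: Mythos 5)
Your proposal follows essentially the same route as the paper: the paper's proof is simply to invoke the monotone functions $Z_2$, $Z_3$ and $Z_4$ of \ref{App:Zs} (together with the general observations preceding the theorem that these $Z$'s are bounded, monotone, have $\phi$ bounded away from zero, and vanish in derivative only at the corresponding fixed-point values, which also yields $V_1$ bounded away from zero), combined with the local stability data — exactly the LaSalle-type invariance argument you outline. The remaining "bookkeeping" you defer is precisely what the paper also leaves implicit, so there is no substantive difference in approach.
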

 \begin{proof}
 Use monotonic functions $Z_{2}$, $Z_3$ and $Z_4$. 
\end{proof}
This theorem implies that the $j$-form is dynamically significant when $\gamma>2/3$ and that it is asymptotically self-similar with non-zero shear. Furthermore, only if $\Sigma_3=0$ exactly, or if $\gamma\in (2/3,6/5]$, the space-time is asymptotically LRS. Hence, in the case of radiation ($\gamma=4/3$), the space-time is generically non-LRS, even asymptotically non-LRS. Recall that $\Sigma_3=0$ amounts to a non-rotating vector and $\Sigma_3\neq0$ to a rotating vector (definition \ref{def:rotation}). The vector rotation is thus identified as a cruical physical mechanism that controls the asymptotic behavior. Generally vector rotation is expected for $\gamma>6/5$ and is thus relevant in the early radiation dominated universe in the presence of gauge modes at large scales.

\section{Dynamical system in Bianchi type V}
\label{sec:dyn5}

As noted from table \ref{tab:J} the type V space-time is specified by
\begin{equation}
\label{BinachiV}
\vert\mathbf{N}_\Delta\vert=N_+=0\phantom{000},\phantom{000}A\,\neq\,0.
\end{equation}
The constraints \eref{Constr1} and \eref{Constr4} consequently give $\mathbf{V}_c=0$ and $\mathbf{\Sigma}_{1}=0$. By theorem \eref{thmDeSitter} the cosmological constant will eventually come to dominate if present. The case considered will therefore be $\Omega_\Lambda=0$, and the complete set of evolution equations \eref{FluidEqs}-\eref{JacId2} reduces to 
\begin{eqnarray}
\label{BVDelta}
&\mathbf{\Sigma}_\Delta'=(q-2-2iR_1)\mathbf{\Sigma}_\Delta,\\
&\Sigma_+'=\left(q-2\right)\Sigma_+-2V_1^2,\\
&\Theta'=(q-2)\Theta-2AV_1,\\
&A'=\left(q+2\Sigma_+\right)A,\\
&V_1'=\left(q+2\Sigma_+\right)V_1,\\
&\Omega_{\rm pf}'=2\left(q+1-\frac{3}{2}\gamma \right)\Omega_{\rm pf}\,.
\end{eqnarray}
From \eref{Constr2} and \eref{Constr3} these equations are subject to the constraints
\begin{eqnarray}
&1=\Sigma_{+}^2+\abs{\mathbf{\Sigma}_{\Delta}}^2+\Omega_{\rm pf}+\Theta^2+V_1^2+A^2, \label{ConstrBV:1}\\
&\Theta V_1=A\Sigma_+\,. \label{ConstrBV:2}
\end{eqnarray}
Among the 5 independent degrees of freedom that are left after accounting for the 2 constraints, one merely reflects the gauge freedom associated with rotations of the $G_2$ plane. Before carrying out a dynamical systems analysis the frame will be specified uniquely so that the final type V system becomes 4 dimensional, i.e. 4 independent physical degrees of freedom.

\subsection{Gauge fixing}

According to (\ref{trans:1}), $\mathbf{\Sigma}_\Delta$ is the only non-scalar quantity above. It transforms as a spin-2 object under a rotation of frame with respect to the basis vector $\mathbf e_1$. From the above equations it is evident that working in F-gauge ($R_1=\phi'=0$) is advantageous. The shear is already almost diagonal. In the F-gauge it can be fully diagonalized by choosing an initial orientation ($\phi$) for the tetrad such that $\mathbf{\Sigma}_\Delta$ is purely real. It is then clear from the evolution equation (\ref{BVDelta}) that the shear will remain diagonal. Note that the F-gauge, in the case of  type V, corresponds to an inertial frame following gyroscopes. This must be so since $R_1=0\,\rightarrow\,\Omega_1=0$ and since $\mathbf{\Sigma}_1=0$ implies $\Omega_2=\Omega_3=0$ by \eref{rot}.

To summarize, the gauge choice is implemented by the following replacements in equation (\ref{BVDelta}):
\begin{equation}
\mathbf{\Sigma}_\Delta \rightarrow \Sigma_-, \qquad R_1 \rightarrow 0.
\end{equation}
Note that the final physical system is 4 dimensional, due to the constraints (\ref{ConstrBV:1}) and (\ref{ConstrBV:2}) and it is mathematical equivalent to the one obtained by instead replacing $\mathbf{\Sigma}_\Delta$ by the gauge independent quantity $\Delta\equiv(\mathbf{\Sigma}_\Delta \mathbf{\Sigma}_\Delta^*)^{1/2}$.  

\subsection{Reduced system}
The dynamical system possesses some  useful properties. Firstly, by computing the derivative of the ratio $V_1/A$, it is seen that it is constant: $(V_1/A)'=0$. In addition, by solving the constraint equation (\ref{ConstrBV:2}), new variables $\eta$, $\nu$ and $\alpha$ may be introduced:
\begin{equation}
(A,V_1)=\eta(\cos\alpha, \sin \alpha), \qquad (\Theta,\Sigma_+)=\nu (\cos\alpha, \sin\alpha). 
\end{equation}
\begin{eqnarray}
\label{BVRed}
&\Sigma_-'=(q-2){\Sigma}_-,\\
&\nu'=\left(q-2\right)\nu-2\eta^2\sin\alpha,\\
&\eta'=\left(q+2\nu\sin\alpha\right)\eta,\\
&\alpha'=0,\\
&\Omega_{\rm pf}'=2\left(q+1-\frac{3}{2}\gamma \right)\Omega_{\rm pf}\,. \label{BVRedlast}
\end{eqnarray}
From \eref{Constr2} and \eref{Constr3} these equations are subject to the constraints
\begin{eqnarray}
&1={\Sigma}_-^2+\Omega_{\rm pf}+\eta^2+\nu^2, \label{ConstrBV:Red1}
\end{eqnarray}
Clearly, every value of $\alpha$ gives an invariant subspace and thus the analysis can be reduced accordingly. Note that the quantities $(V_1/A)$ and $(\Sigma_+/\Theta)$ are equal and constant ($=\tan\alpha$). Moreover, $Z_5=\Sigma_-$ is a monotone function which for all $\gamma<2$ decays towards $0$. The above system can easily be analysed. 

\subsection{Equilibrium sets}
The list of equilibrium points found in type V is given in table \ref{tab:FpBV}. Among the solutions are again Jacobs' Extended Disk, Kasner and Jacobs' Sphere, first presented in section \ref{ch:puretemporal}. Since type V has maximally symmetric hypersurfaces of homogeneity with negative curvature, the open and flat FLRW universes are also among the solutions.

Self-similar solutions with isotropy-violating matter sector ($V_1\neq 0$) fall within two one-parameter families of equilibrium points. One of them is the Plane Wave type solutions, which are solutions of the Einstein Field Equations possessing a covariantly constant null vector. In general plane wave solutions have a 2-dimensional Abelian group acting freely on the spatial hypersurfaces. That is to say: they are special exact $G_2$ solutions to the evolution equations, as confirmed in the present analysis~\cite{gron07}. Confer with \cite{hervik03} for more on plane waves. The second equilibrium set is a Bianchi type V generalization of Wonderland. Both sets are locally rotationally symmetric (LRS) solutions, since the spatial sections have isotropic geometry and the shear tensor is axisymmetric and aligned with the spatial part of the field strength $j$-form. Since $\Sigma_+<0$ in both sets, the expansion anisotropy is prolate type, i.e. a spherical initial configuration of test particles will be deformed to a prolate spheroid. Further details on the geometry of these solutions:
\begin{itemize}
	\item \emph{Plane Wave}  is a one-parameter family $\mathcal P_{\rm P.W.}(\Sigma_+)$ of equilibrium points with $\Sigma_+\in(-1,0]$. The \emph{Milne solution} corresponds to the point $\Sigma_+=0$. In the other end, the Plane Waves approach the Kasner vacuum solution as $\Sigma_+\rightarrow-1$. Using $\{\Sigma_-, \Sigma_+, A, V_1\}$ as independent variables $\mathcal P_{\rm P.W.}$ is a helix in state space, whose projection on the $\Sigma_+$-$V_1$ plane is the circle centered at $(-\frac{1}{2},0)$ with radius $1/2$. Thus there are two P.W. points for each value of $\Sigma_+$, as seen in table \ref{tab:StabB5}. Although not one-to-one with the curve, $\Sigma_+$ is used as parameter since its value controls the stability uniquely, as it turns out. 
	The line-element is
	\begin{equation} {\rm d} s^2=-{\rm d}t^2+t^2{\rm d}x^2+t^{2s}e^{2sx}\left({\rm d}y^2+{\rm d}z^2\right), 
	\end{equation}   
	where $0<s<1$. Here, the parameter $\Sigma_+$ is given by $\Sigma_+=-(1-s)/(1+2s)$. 
	
	\item \emph{Wonderland} is a one-parameter family $\mathcal P_{\rm W}(\lambda)$ of equilibrium points, where $\lambda$ is a free parameter that is restricted to the range 
 \begin{equation}
    0\,\leq\,\lambda\,< \, \lambda_{\rm sup} \equiv \frac{\sqrt{3}}{4}\sqrt{2-\gamma}.
\end{equation}
Wonderland joins the Plane Wave set in the limit $\lambda \rightarrow\lambda_{\rm sup}$ (see figure \ref{Fig:stability_BV}). It is a straight line segment in state space, in fact a \emph{chord} of the Plane Wave curve. This line is a Bianchi type V generalization of the Bianchi type I fix point (with the same name) studied in section \ref{sec:dyn1}, that corresponds to the point $\lambda=0$.  Its range of existence is the open interval $\gamma\in (2/3 ,2)$. It approaches the open FLRW solution when $\gamma\rightarrow 2/3$ (in which case $\lambda=0$ corresponds to flat FLRW and $\lambda\rightarrow\lambda_{\rm sup}$ corresponds to Milne) and the Kasner solution ($\Sigma_+=-1$) when $\gamma\rightarrow 2$. Interestingly, the entire curve has a deceleration parameter $q=-1+3\gamma/2$ identical to the flat FLRW solution. The line-element of the type V Wonderland is
	\begin{equation} 
    {\rm d} s^2=-{\rm d}t^2+t^2{\rm d}x^2+t^{\frac{2-\gamma}{\gamma}}e^{2kx}\left({\rm d}y^2+{\rm d}z^2\right),
    \end{equation}
	where $0<k< \frac{2-\gamma}{2\gamma}$. Explicitly, the parameter $k$ is related to $\lambda$ via $k=2\sqrt{3(2-\gamma)}\lambda/(3\gamma)$.	
\end{itemize}
The relation among the self-similar solutions as a function of the model parameter $\gamma$ is portrayed in figure \ref{fig:final}.
\begin{figure}[H]
	\centering
	\includegraphics[width=\textwidth]{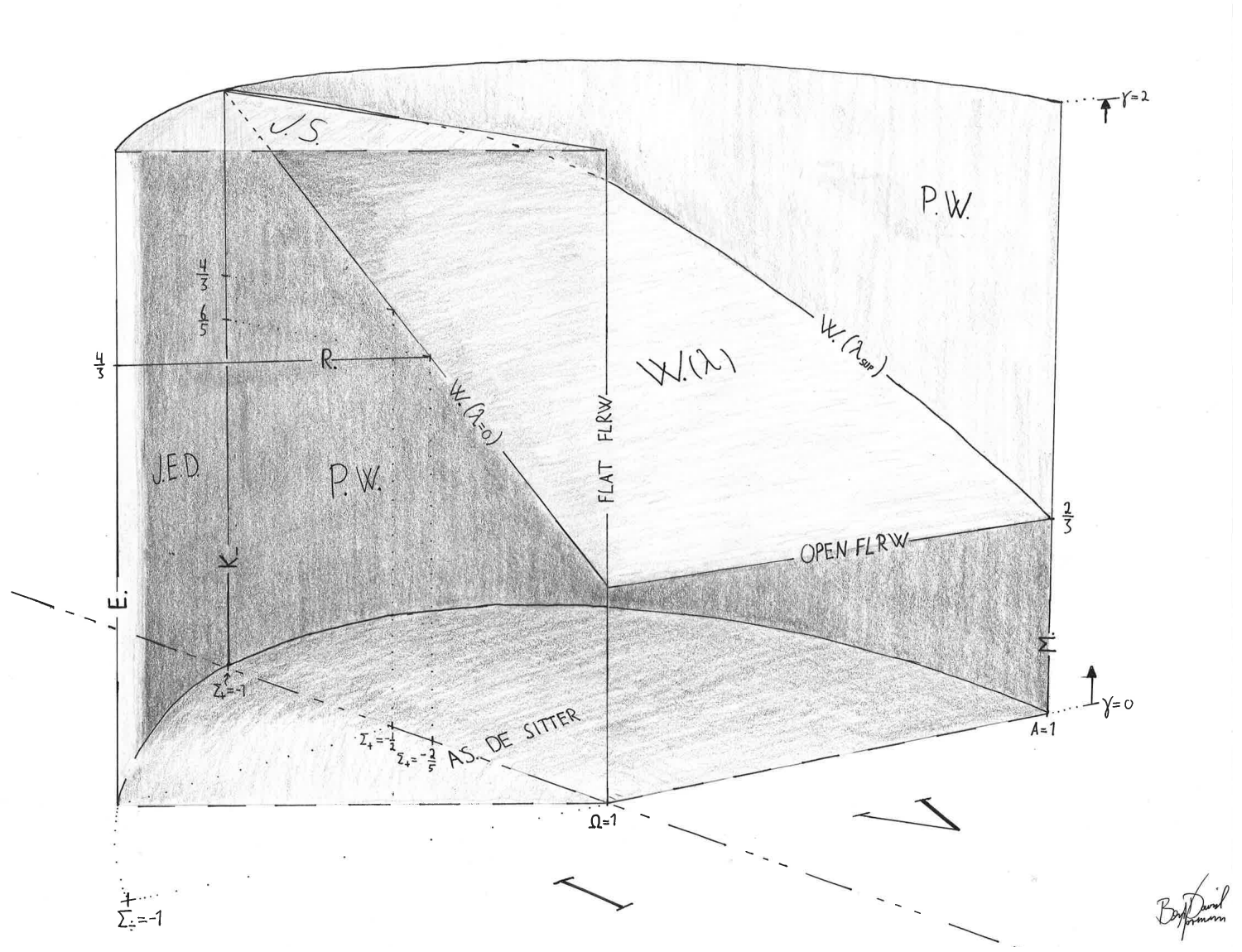}\caption{An artist's impression of the Bianchi type I and type V state spaces as a function of $\gamma$. It serves its purpose of giving a correct overview of how the solutions are all connected to each other as the parameter $\gamma$ varies (from bottom to top). Solid lines and labeled planes represent self-similar solutions, abbreviations can be looked in tables \ref{tab:FpB1} and \ref{tab:FpBV}.}  
\label{fig:final}
\end{figure}

\subsection{Analysis}
In this section the structure of state space will be analyzed based on local stability analysis and geometric observations. Generalizations to global results obtained in the following subsection will be commented on. The eigenvalues of the linearization matrix around each fixed point can be found in table \ref{tab:FpBV} and the stability of each equilibrium set is summarized in table \ref{tab:StabB5}.
\begin{table}[H]
	\centering
	\resizebox{\textwidth}{!}{\begin{tabular}{clcccccccc}
			\toprule
			\multicolumn{10}{c}{\textbf{Equilibrium sets in Bianchi type V}} \\
			\hline
			Type&Name (abbr.) & $q$ &$\gamma $&$A$&$\Omega_{\rm pf}$&$\Sigma_+$&$\Sigma_-$&$V_1$&$\Theta$\\
			\hline
			V&Plane wave (P.W.) & $-2\Sigma_+$ &$[0,2]$&$(\Sigma_++1)$&0&$(-1, 0]$&0&$\pm\sqrt{-\Sigma_+(\Sigma_++1)}$& $\mp\sqrt{-\Sigma_+(\Sigma_++1)}$ \Tstrut \\
			V& Wonderland (W.) & $-1+\frac{3}{2}\gamma $ &$(\frac{2}{3},2)$&$\sqrt{3(2-\gamma )}\,\lambda$&$\frac{3}{4}\left(2-\gamma \right)-4\lambda^2$&$\frac{1}{2}-\frac{3}{4}\gamma $&0&$\pm\frac{\sqrt{3}}{4}\sqrt{(3\gamma -2)(2-\gamma )}$&$\mp\sqrt{3\gamma - 2}\,\lambda$\\
			V&open FLRW & $0$&$\frac{2}{3}$&$(0,1)$&$\sqrt{1-A^2}$&0&0&0&0\\
			I&flat FLRW & $-1+\frac{3}{2}\gamma $&$[0,2)$&0&1&0&0&0&0\\
			I& Jacobs' Ext. D. (J.E.D.) &  $2$&$[0,2)$&0&0&$[-1,1]$&$[-1,1]$&0& $\pm \sqrt{1-\Sigma_+^2 - \Sigma_-^2}$ \\
			I & Jacobs' Sphere (J.S.) & $2$ & $2$ &0& $[0,1]$ & $[-1,1]$ & $[-1,1]$ & 0& $\pm \sqrt{1-\Sigma_+^2 - \Sigma_-^2-\Omega_{\rm pf}}$ \\
			\hline
	\end{tabular}}
	\caption{Summary of equilibrium points for Bianchi type V model. Note that the Milne universe (not explicitly in the table) is a subset ($\Sigma_+=0$) of P.W. . Similarly the Kasner (K.) vacuum solution is a subset ($\Theta=0$) of J.E.D. .}
	\label{tab:FpBV}
\end{table}

\begin{table}[H]
	\centering
	\resizebox{\textwidth}{!}{\begin{tabular}{clcclclclclc}
			\toprule
			\multicolumn{12}{c}{\textbf{Classification of equilibrium sets in Bianchi type V}} \\
			\hline
			Type&Name (abbr.) && Existence && Attractor && Saddle && Repeller && Inconclusive \\
			\hline 
			V&Plane Wave (P.W.) && $\gamma\in[0,2], \Sigma_+\in(-1,0]$ &&$\gamma>\frac{2}{3} \cap \Sigma_+>\frac{1}{2}-\frac{3}{4}\gamma$ && else &&&&$\gamma\,=\,\frac{2}{3} \cap \Sigma_+=0$ \Tstrut  \\
			V&Wonderland (W.) && $\gamma\in(\frac{2}{3},2)$, $\lambda\in[0,\lambda_{\rm sup})$ && $\forall$  && &&&&  \\
			V&open FLRW  && $\gamma = \frac{2}{3}$ &&&&&&&&$\forall$\\
			I&flat FLRW && $\gamma\in[0,2)$ &&$\gamma\,\in\,[0,\frac{2}{3})$&&$\gamma\,\in\,(\frac{2}{3},2)$&&&&$\gamma=\frac{2}{3}$\\
			I&Jacobs' Ext. D. (J.E.D.) && $\gamma\in[0,2)$ &&&&&& $\Sigma_+>-1$ &&$\Sigma_+\,=-1$\\
			I&Jacobs' Sphere (J.S.) && $\gamma=2$ &&&& && $\Sigma_+>-1$ &&$\Sigma_+\,=-1$ \\ \hline
	\end{tabular}}
	\caption{The domains where the stability analysis is conclusive are divided into attractor, saddle and repeller subdomains by the conditions above.  The rightmost column shows the domains where the linear stability analysis is inconclusive. }
	\label{tab:StabB5}
\end{table}

Several of the equilibrium sets, specifically P.W, W, J.E.D. and J.S., are curves or multidimensional regions in state space.  As in the analysis of the Bianchi type I model, the emphasize is on classifying the stability of each equilibrium set as a whole. Zero-eigenvalues that corresponds to a perturbation along the equilibrium set itself is therefore again ignored in the stability analysis (see last paragraph of section \ref{subs:B1_class}). In each case of a zero eigenvalue the direction of the corresponding eigenvector has been checked explicitly. For instance the Plane Wave set is the set of points:
\[
\mathcal P_{\rm P.W.}(\Sigma_+) = (0, \Sigma_+, -(\Sigma_++1), -\sqrt{-\Sigma_+(\Sigma_++1)})
\]
using $\{\Sigma_-, \Sigma_+, A, V_1\}$ as independent variables. It is easy to verify that the tangent vector $\frac{\mathcal P_{\rm P.W.}}{d \Sigma_+}$ is an eigenvector of the linearization matrix around P.W. Specifically it corresponds to the eigenvalue $l_3=0$ in table \ref{tab:FpBV} and hence corresponds to a perturbation along the equilibrium line $\mathcal P_{\rm P.W.}(\Sigma_+)$. The classification of P.W. in table \ref{tab:StabB5} is therefore based solely on the three eigenvalues $\{l_1,l_2,l_4\}$, i.e. on linear perturbations normal to the curve $\mathcal P_{\rm P.W.}(\Sigma_+)$. By the same argument, the stability of Wonderland is based solely on the three eigenvalues $\{l_1,l_2,l_4\}$, since $l_3=0$ correponds to a perturbation in the curvature direction ($A$), i.e. along $\mathcal P_{\rm W}(\lambda)$. 

For $\gamma<2/3$ the flat FLRW solution is stable and the unique attractor, in consistency with the no-hair theorem \ref{NoHair}. As for the past stability, all of J.E.D. is a repeller, including the Kasner subset $\Theta=0$ (apart from the point $\Sigma_+=-1$ which is inconclusive at linear order).  

For $\gamma>2/3$ the structure of state space is rather intruiging and the relation among the attractors is shown in figure \ref{Fig:stability_BV}. The part of $\mathcal P_{\rm P.W.}(\Sigma_+)$ where $\Sigma_+>\frac{1}{2}-\frac{3}{4}\gamma$ is stable, whereas the part $\Sigma_+<\frac{1}{2}-\frac{3}{4}\gamma$ consists of saddle points. The point $\Sigma_+ = \frac{1}{2}-\frac{3}{4}\gamma$, where the eigenvalue $l_4$ vanishes, happens to be the point where Wonderland joins Plane Wave (see figure \ref{Fig:stability_BV}). One must conclude that the Plane Wave at this particular point is unstable and a saddle since a perturbation in the direction corresponding to $l_4$ gives a Wonderland solution. The Wonderland curve $\mathcal P_{\rm W}(\lambda)$ on the other hand is stable for all $\lambda$. 
\begin{figure}[t!]
	\centering
	\begin{overpic}[width=0.5\textwidth,tics=10]{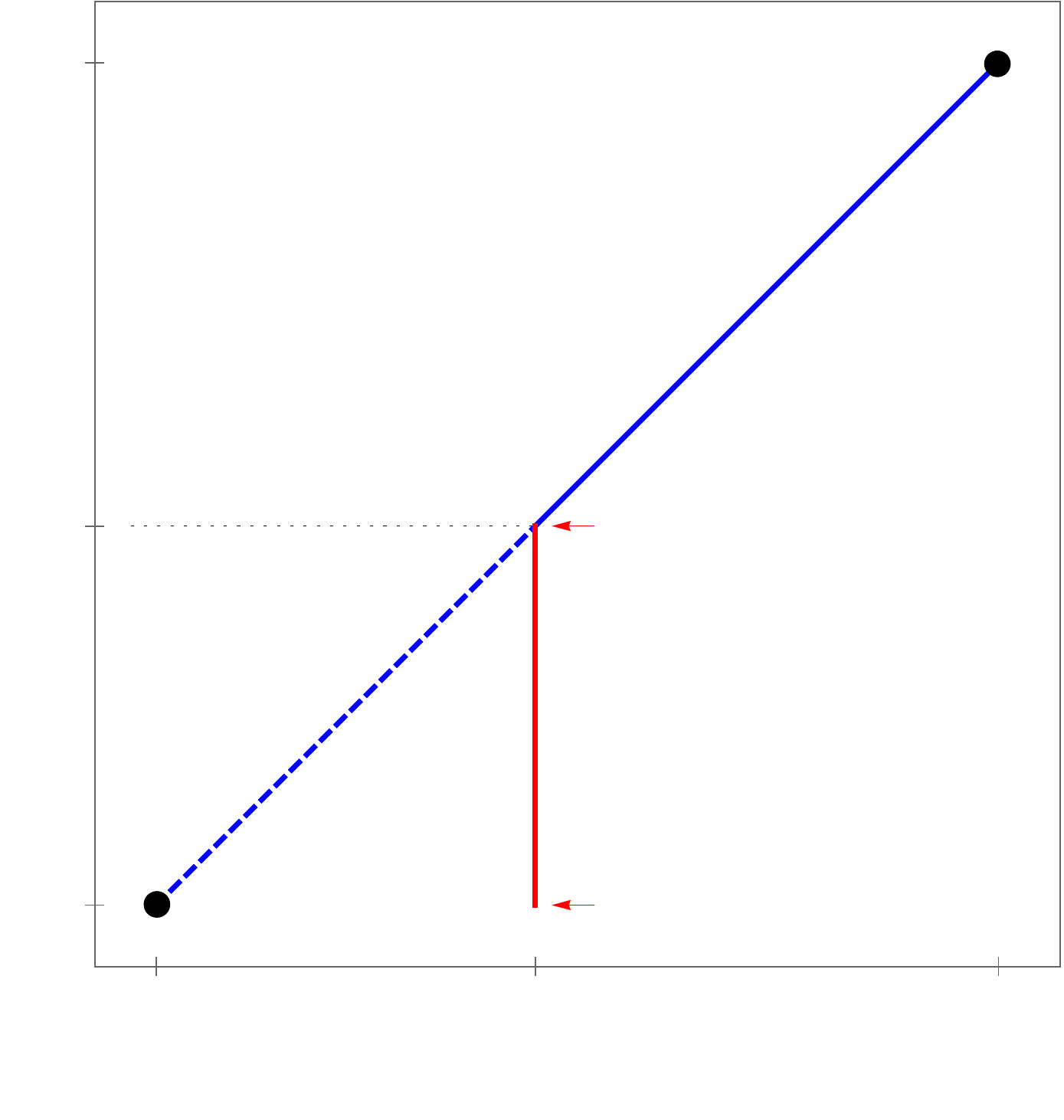}
		
		\put (38,49) {\rotatebox{45}{$\displaystyle \color{blue}\mathcal{P}_{\rm P.W.}(\Sigma_+)$}}
		
		\put (38,49) {\rotatebox{45}{$\displaystyle \color{blue}\mathcal{P}_{\rm P.W.}(\Sigma_+)$}}
		
		\put (43,27) {\rotatebox{90}{\footnotesize \color{red} stable}}
		
		\put (50,27) {\rotatebox{90}{$\displaystyle \color{red}\mathcal{P}_{\rm W.}(\lambda)$}}
		\put (54,18) {\scriptsize \color{red} $\displaystyle\lambda=0$ (Bianchi type I)}
		\put (54,52) {\scriptsize \color{red} $\displaystyle\lambda\rightarrow\lambda_{\rm sup}$ (W. joins P.W.)}
		\put (72,93) { Milne}
		\put (16,17) { Kasner}
		\put (32,34) {\rotatebox{45}{\footnotesize \color{blue} unstable}}
		\put (55,57) {\rotatebox{45}{\footnotesize \color{blue} stable}}
		\put (45,0) { \large $\displaystyle \Sigma_+$}
		\put (-4,50) {\rotatebox{90}{ \large $\displaystyle A$}}

		\put (42,9) {\scriptsize $\displaystyle (2-3\gamma)/4$}
		\put (11,9) {\scriptsize $-1$}
		\put (88.5,9) {\scriptsize $0$}
		
		\put (3,45) {\rotatebox{90}{\scriptsize $\displaystyle 3(2-\gamma)/4$}}
		\put (3,18) {\rotatebox{90}{\scriptsize $0$}} 
		\put (4,93) {\rotatebox{90}{\scriptsize $1$}}
		
	\end{overpic}
	\caption{Stability relations in Bianchi type V for $\gamma\in (2/3,2)$.}
	\label{Fig:stability_BV}
\end{figure}

Since there is a multitude of attractors for $\gamma>2/3$, that are qualitatively very different, one may ask: 
\emph{Which initial conditions gives a Plane Wave, or a Wonderland solution, asymptotically?}

This question, that has a global character, is answered by theorem \ref{thm:BV} in the following subsection. The existence of this theorem is suggested by some geometric observations that are worth noting. First recall that $\Sigma_+ / \Theta$ is a constant of motion. Thus the shadow of each integral curve on the $\Theta\Sigma_+$-plane is a radial line segment. It is therefore worthwile to consider the projection of state space onto the $\Theta\Sigma_+$-plane. Since $\Sigma_-$ is monotone, consider without loss of generality the invariant subspace $\Sigma_-=0$. The shadow of J.E.D. is then the unit circle, whereas P.W. is the circle
\[ \Theta^2 + \left( \Sigma_++\frac{1}{2} \right)^2 = \left( \frac{1}{2} \right)^2. \]
In this projection Wonderland is a horizontal \emph{chord} of the Plane Wave circle, see figure \ref{fig:BVglobal}. Since the integral curves are radial, it is clear that one and only one attractor point is dynamically accessible for each integral curve. It follows that Wonderland is dynamically accessible only if
\begin{equation} 
\frac{\Sigma_+^2}{\Theta^2} > \frac{3\gamma-2}{3(2-\gamma)} \label{faeef} 
\end{equation}
or equivalently (in variables of the reduced system)
\[ 4\sin^2\alpha > 3\gamma-2. \]
 Similarly, it is clear that the stable part of the Plane Wave is dynamically accessible only if the condition is \emph{not} satisfied. All this is geometrically obvious by looking at figure \ref{fig:BVglobal}, where the regions that satisfy (\ref{faeef}) are shaded in green.
 	\begin{figure}[!ht]
 	\subfloat[$\gamma=1$\label{subfig-1:dummy}]{%
 		\begin{overpic}[width=0.45\textwidth]{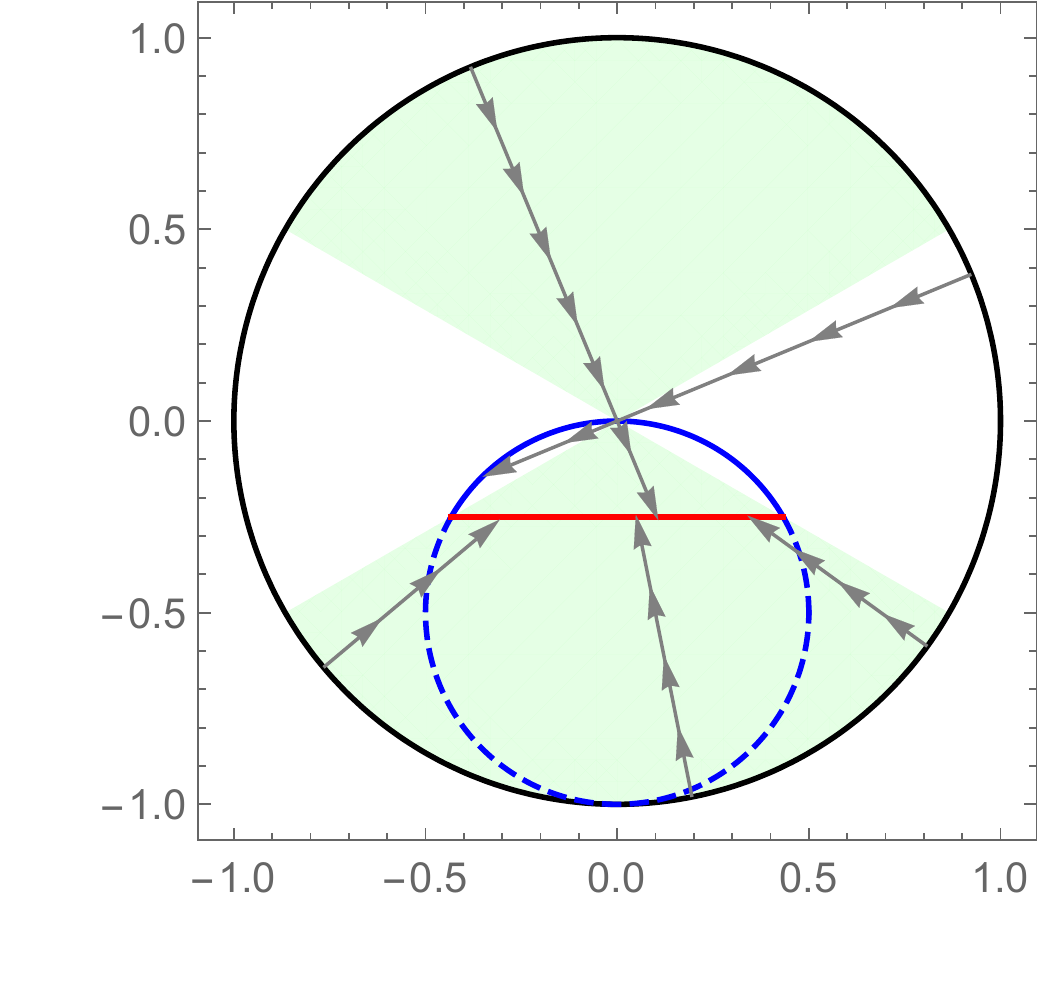}
 			\put (55,3) { $\displaystyle \Theta$}
 			\put (-2,55) { $\displaystyle \Sigma_+$}
 			\put (25,80) {\rotatebox{45}{\scriptsize \color{black} J.E.D.}}
 			\put (39,48) {\rotatebox{45}{\scriptsize \color{blue} P.W.}}			
 			\put (52,48) {\rotatebox{0}{\scriptsize \color{red} W.}}					
 		\end{overpic}
 	}\hfill
 	\subfloat[$\gamma=\frac{4}{3}$\label{subfig-2:dummy}]{%
 		\begin{overpic}[width=0.45\textwidth]{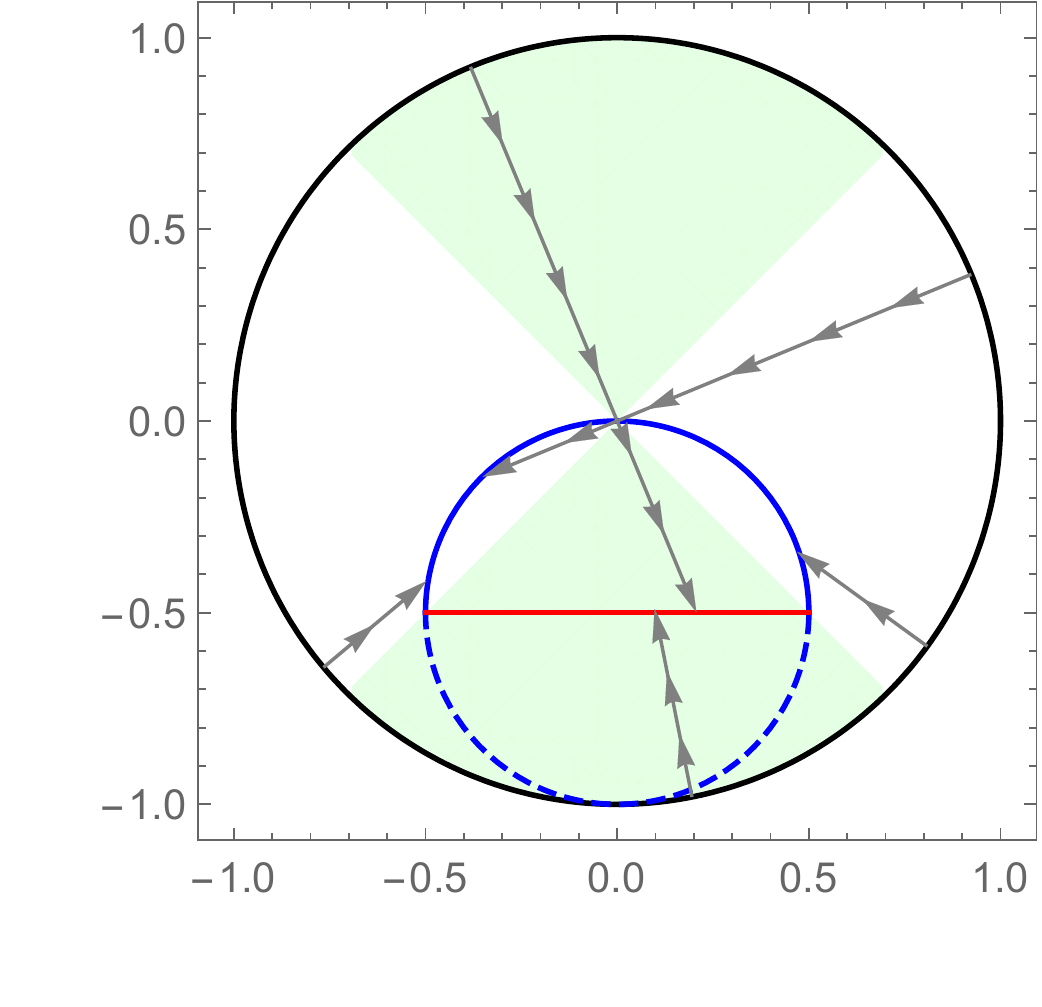}
 			\put (55,3) { $\displaystyle \Theta$}
 			\put (-2,55) { $\displaystyle \Sigma_+$}
 			\put (25,80) {\rotatebox{45}{\scriptsize \color{black} J.E.D.}}
 			\put (39,48) {\rotatebox{45}{\scriptsize \color{blue} P.W.}}			
 			\put (57,39) {\rotatebox{0}{\scriptsize \color{red} W.}}					
 		\end{overpic}
 	}
 	\caption{Projection of the invariant subspace $\Sigma_-=0$ onto the $\Theta\Sigma_+$-plane, for a dust model in (a) and radiation in (b). In the regions shaded in green the condition \ref{faeef} for dynamical access to a Wonderland attractor point is satisfied. Integral curves are radial lines (direction indicated by arrows) that goes from Jacobs' Extended Disk (J.E.D.) to an attractor point, either on Wonderland (W.) or on Plane Wave (P.W.). The stable part of P.W. is indicated by solid (blue) line, and the unstable part by a dashed (blue) line. }\label{fig:BVglobal}
 \end{figure}

Combined with the results of the local stability analysis these observations suggest that Wonderland is reached asymptotically iff \eref{faeef} is satisfied, and that a Plane Wave solution is reached asymptotically otherwise (assuming only that the solution approaches an attractor point at late times, and that $\Omega_{{\rm pf}}>0$). This is the content of theorem \ref{thm:BV} that is proved below, using monotone functions.

\subsection{Global analysis and an anisotropic hair theorem}
Also in the type V model there exists sufficient number of monotone functions to determine global behaviour of solutions, see \ref{App:Zs}. In addition to the monotone functions, the analysis rests on the observation that the quantities $(V_1/A)$ and $(\Sigma_+/\Theta)$ are equal and constant ($=\tan\alpha$). Let $(V_1)_0$ denote the value of $V_1$ at the equilibrium point. From this the above one may now infer: 
\begin{thm}[Anisotropic hairs for $2/3<\gamma<2$] 
Consider a Bianchi type V model ($A>0$) with a non-zero $j$-form field $V_1\neq 0$. Assume that $2/3<\gamma<2$, then the solutions will be asymptotically anisotropic with $V_1\rightarrow (V_1)_0\neq 0$ at late times. More specifically, if: 
\begin{itemize}
\item{} $2/3<\gamma<2$ and $\Omega_{\rm pf}=0$, then it will asymptotically approach a \emph{plane-wave} solution; 
\item{} $\Omega_{\rm pf}>0$ and $4\sin^2\alpha\leq (3\gamma-2)$, then it will asymptotically approach a \emph{plane-wave} solution;
\item{} $\Omega_{\rm pf}>0$ and $4\sin^2\alpha> (3\gamma-2)$, then it will asymptotically approach a \emph{Wonderland} solution.
\end{itemize}
\label{thm:BV}
\end{thm}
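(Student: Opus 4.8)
\emph{Proof proposal.}\quad The plan is to carry out the whole argument in the reduced system \eref{BVRed}--\eref{BVRedlast} with the constraint \eref{ConstrBV:Red1}, using the structural facts already established: $\alpha$ is a constant of motion, $Z_5=\Sigma_-$ is monotone and decays to $0$, and the ratios $V_1/A=\Sigma_+/\Theta=\tan\alpha$ are constant along each orbit. Since $V_1=\eta\sin\alpha\neq0$, this forces $\sin\alpha\neq0$ and, after fixing the sign of $\eta$, $\eta>0$; and since type V has $A=\eta\cos\alpha>0$, also $\cos\alpha\neq0$. Because $Z_5=\Sigma_-\to0$, the $\omega$-limit set of any orbit lies in the invariant subspace $\Sigma_-=0$, on which --- using \eref{ConstrBV:Red1} and the constancy of $\alpha$ --- the flow is two-dimensional on the spherical cap $\eta^2+\nu^2+\Omega_{\rm pf}=1$, $\eta>0$ (whose boundary arc $\Omega_{\rm pf}=0$ is invariant), so Poincar\'e--Bendixson applies.

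On this cap I would bring in the monotone functions of \ref{App:Zs} (used exactly as in the proof of the Bianchi type I statement, Theorem \ref{thm:anisotropic:BI}) to exclude periodic orbits and polycycles, so that every orbit converges to a single equilibrium point. For $2/3<\gamma<2$ the equilibria on the cap are flat FLRW, the two Kasner points (on $\Omega_{\rm pf}=0$), the Plane Wave point compatible with the given $\alpha$ (at $\Sigma_+=-\sin^2\alpha$, on $\Omega_{\rm pf}=0$), and --- only when $4\sin^2\alpha>3\gamma-2$ --- one type-V Wonderland point (with $\Omega_{\rm pf}>0$). From table \ref{tab:StabB5}: for $\gamma>2/3$ the Kasner points are repellers and flat FLRW is a saddle whose stable manifold is the invariant arc $\eta=0$, so an orbit with $\eta_0>0$ cannot converge to either; and the Plane Wave point is a stable node iff $\Sigma_+>\frac12-\frac34\gamma$, i.e.\ iff $4\sin^2\alpha<3\gamma-2$ (at equality it merges with Wonderland), while Wonderland is stable whenever it exists. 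Hence the limiting equilibrium is one of the \emph{stable} ones, and in particular $V_1\to(V_1)_0\neq0$ --- it equals $\pm|\sin\alpha\cos\alpha|$ on the Plane Wave point and $\pm\frac{\sqrt3}{4}\sqrt{(3\gamma-2)(2-\gamma)}$ on Wonderland, both nonzero (see table \ref{tab:FpBV}).

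It then remains to identify which stable equilibrium, and here the radial-line geometry displayed before the theorem is decisive: the shadow of the orbit in the $\Theta\Sigma_+$-plane is the ray $\Sigma_+=\Theta\tan\alpha$, and it reaches the Wonderland chord $\Sigma_+=\frac12-\frac34\gamma$ inside the Plane Wave circle $\Theta^2+(\Sigma_++\frac12)^2=\frac14$ precisely when \eref{faeef} holds, equivalently $4\sin^2\alpha>3\gamma-2$; moreover exactly in that range the Plane Wave point on the ray is a saddle, its unstable direction pointing into $\Omega_{\rm pf}>0$. Combining the pieces: if $\Omega_{\rm pf}\equiv0$ the orbit lies on the one-dimensional invariant arc $\Sigma_-=\Omega_{\rm pf}=0$, whose only non-repelling equilibrium is the Plane Wave point, giving the first bullet; if $\Omega_{\rm pf,0}>0$ and $4\sin^2\alpha\le3\gamma-2$ there is no Wonderland point on the ray and the unique accessible stable equilibrium is the Plane Wave point, giving the second bullet; and if $\Omega_{\rm pf,0}>0$ and $4\sin^2\alpha>3\gamma-2$ the Plane Wave point on the ray is a saddle unreachable from $\Omega_{\rm pf,0}>0$, so the orbit converges to Wonderland, giving the third bullet.

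\emph{Main obstacle.} The geometry and the local stability table together make the dichotomy transparent \emph{once} convergence to a stable equilibrium is known; the genuinely delicate part is establishing that convergence --- ruling out periodic orbits, polycycles, and convergence to the saddles/repellers --- uniformly over $\gamma\in(2/3,2)$ and over both the interior $\Omega_{\rm pf}>0$ and the boundary $\Omega_{\rm pf}=0$. This rests entirely on the monotone functions of \ref{App:Zs}: one must verify that the list is collectively exhaustive, so that for every admissible initial datum at least one $Z$ has a log-derivative of strictly definite sign (including at the awkward parameter values, just as in the Bianchi type I analysis), and that the sign of the relevant $Z'$ flips precisely across $4\sin^2\alpha=3\gamma-2$, consistently with the two different attractors on the two sides of that locus.
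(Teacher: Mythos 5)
Your outline reproduces the paper's supporting discussion (constancy of $\alpha$, decay of $Z_5=\Sigma_-$, the radial-ray geometry of \eref{faeef} deciding which attractor is accessible on each orbit), but it stops exactly where the paper's actual proof lives. The paper's proof consists of the explicit monotone functions $Z_6$ and $Z_7$ of \ref{App:Zs}, applied as global Lyapunov-type functions: for $4\sin^2\alpha\geq 3\gamma-2$ the function $Z_6=\eta^b\Omega_{\rm pf}^c/(1+n\nu)^a$ satisfies $Z_6'/Z_6=\frac{2-\gamma}{1+n\nu}\bigl[3(\nu+n)^2+\frac{3}{2}a\Sigma_-^2\bigr]\geq 0$ with the denominator bounded away from zero (since $n\leq 1$, $|\nu|\leq 1$), so $Z_6$ is nondecreasing and bounded and the $\omega$-limit set is confined to $\nu=-n$, $\Sigma_-=0$, i.e.\ the Wonderland point on the given ray; for $4\sin^2\alpha\leq 3\gamma-2$ (and also for $\Omega_{\rm pf}=0$ with arbitrary $\alpha$) the function $Z_7=\eta^2/(1+\nu\sin\alpha)^2$ satisfies $Z_7'/Z_7=\frac{4}{1+\nu\sin\alpha}\bigl[(\nu+\sin\alpha)^2+\cos^2\alpha\,\Sigma_-^2+\bigl(\frac{3\gamma-2}{4}-\sin^2\alpha\bigr)\Omega_{\rm pf}\bigr]\geq 0$, forcing $\nu\to-\sin\alpha$, $\Sigma_-\to 0$ and $\Omega_{\rm pf}\to 0$, i.e.\ the Plane Wave point. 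This direct LaSalle-type argument delivers global convergence at once; the check you defer as the ``main obstacle'' — that suitable $Z$'s exist with definite-sign derivatives precisely on the two sides of $4\sin^2\alpha=3\gamma-2$ — is exactly what the appendix supplies, and since you neither verify it nor exhibit a substitute, the convergence step of your argument is an acknowledged placeholder rather than a proof.

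Two further points. Invoking Poincar\'e--Bendixson for the $\omega$-limit set of a three-dimensional orbit merely because that set lies in the invariant plane $\Sigma_-=0$ is not legitimate as stated: the limit set is a compact connected invariant set of the planar flow, but not in general the $\omega$-limit set of any single planar orbit, so periodic orbits and polycycles are not excluded by Poincar\'e--Bendixson alone — another reason the monotone functions must carry the load, and with them the Poincar\'e--Bendixson scaffolding is unnecessary. Also, in your first bullet the claim that for $\Omega_{\rm pf}\equiv 0$ the orbit ``lies on'' the one-dimensional arc $\Sigma_-=\Omega_{\rm pf}=0$ is incorrect ($\Sigma_-$ need not vanish initially); only the limit set lies there, and again it is $Z_7$, whose derivative in the vacuum case is controlled by $(\nu+\sin\alpha)^2+\cos^2\alpha\,\Sigma_-^2$ for every $\alpha$ with $\cos\alpha\neq 0$, that yields that bullet.
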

\begin{proof}
Use of monotone functions $Z_6$ and $Z_7$. 
\end{proof}

\section{Conclusion}

In this paper the evolution of $p$-form gauge fields in anisotropic space-times (Bianchi type I-VII$_h$) has been investigated. The observational evidence of some unexpected features (``anomalies'') on very large scales in the CMB, and on the other side the lack of consideration of $p$-forms in a cosmological context, motivated the investigation of the evolution of such general matter fields ($p$-forms) in anisotropic space-times. 

The general equations for a gauge field with a $j$-form field strength (with $j\in\{1,3\}$) alongside a perfect fluid obeying a $\gamma$-law equation of state and a 4-form (cosmological constant), have been computed for the first time, in an orthonormal frame. A dynamical systems approach has then been applied to the cases of Bianchi type I and V.
All self-similar cosmological solutions represented by equilibrium points have been found and their stability has been analyzed. 
In the case of Bianchi type I both a Locally Rotationally Symmetric (LRS) (``Wonderland'') and non-LRS (``the Rope'' and ``the Edge'') self-similar solutions have been found and in the physically relevant parameter region $6/5<\gamma \le 2$, a unique non-LRS solution for each value of $\gamma$ is found to be stable, in fact a global attractor. 
The Rope and the Edge possess a purely spatial field strength \emph{rotating} with respect to  the comoving inertial tetrad, and this ``vector rotation'', which is a qualitatively new feature in cosmological dynamical systems, has been identified as a cruical physical mechanism that controls the asymptotic behavior in spatially flat backgrounds with $\gamma>6/5$.

The Bianchi type V space-time, on the other hand, can only accomodate a field strength with a single spatial component. Thus there is not enough freedom for the vector to rotate and there is no Bianchi type V version of the Edge or the Rope. Wonderland, on the other hand, possesses a non-rotating vector and this solution generalizes to the Bianchi type V domain via the parameter $\lambda$, that takes the value $\lambda=0$ in the particular case of Bianchi type I. This one-parameter family of equilibrium points are attractors in the entire existence range $2/3<\gamma<2$. Furthermore, a one-parameter family of Plane Wave solutions with a null-like field strength $j$-form have been found in Bianchi type V. For $\gamma>2/3$ both Wonderland and Plane Wave type attractors are present, and the future assymptotic behavior has been determined globally using geometric observatations and monotone functions.   

As a consequence of no-hair theorems in section \ref{sec:nohair}, the considered family of minimally coupled gauge fields are not observationally relevant during an inflationary phase in the early universe, i.e. for models with $\gamma <2/3$. On the other hand, for $\gamma\in(2/3, 4/3]$ all spatially flat anisotropic attractors have a deceleration parameter $q=-1+\frac{3}{2}\gamma$ identical to the corresponding flat FLRW solution. Thus the matter and the radiation dominated epochs are interesting as potential playgrounds where a $j$-form may participate in the cosmic dynamics and produce imprints in cosmological observables. Note that a purely spatial $j$-form has equation of state $-1/3$ and thus effectively acts as spatial curvature on the background level. Thus non-trivial large-scale imprints in the CMB induced by the $j$-form via the shear tensor, analogous to those produced by spatial curvature in general Bianchi models \cite{1985MNRAS.213..917B} (which has been linked to CMB anomalies \cite{Jaffe:2005pw,Jaffe:2005gu}), are expected even in the spatially flat Bianchi type I model.

As a suggestion for further work, it is also natural to mention the specific analysis of the remaining solvable Bianchi types II,III,IV, VI$_h$ and VII$_h$. The remaining types VIII and especially IX require a slightly different approach. 

\section*{Acknowledgements}
SH was supported through the Research Council of Norway, Toppforsk
grant no. 250367: \emph{Pseudo-Riemannian Geometry and Polynomial Curvature Invariants:
Classification, Characterisation and Applications.}

\newpage
\appendix{\huge\textbf{Appendices}}

\section{Decomposition}
The notation used in the paper is such that
\label{app:Decomp}
\begin{equation}
x_{ab}=
\left(\begin{array}{ccc}
-2x_+      & \sqrt{3}x_2&\sqrt{3}x_3\\
\sqrt{3}x_2     & x_++\sqrt{3}x_-&\sqrt{3}x_\times\\
\sqrt{3}x_3     & \sqrt{3}x_\times&x_+-\sqrt{3}x_-\\
\end{array}\right)
\end{equation}
where $x_{ab}$ is one of the traceless matrices $\pi_{ab}$ or $\sigma_{ab}$ (their normalized equivalents $\Pi_{ab}$ and $\Sigma_{ab}$ have the same structure). For the considered Bianchi type I-VII$_h$ models $n_{ab}$ can always be written on the form
\begin{equation}
\label{N}
n_{ab}=
\left(\begin{array}{ccc}
0      & 0&0\\
0     & n_++\sqrt{3}n_-&\sqrt{3}n_\times\\
0& \sqrt{3}n_\times &n_+-\sqrt{3}n_-\\
\end{array}\right)
\end{equation}

\section{Diagonal shear frame} 
\label{App:diag}
In Bianchi type I perfect fluid models the diagonal shear frame admits an inertial tetrad, i.e. $\Omega_{\rm a}=0$ for $a\,\in\{1,2,3\}$. That, however, does not work here, where the gauge field sources the shear tensor non-trivially. In fact, in order to avoid frame rotation the gauge field would need to be aligned with one of the eigen vectors of the shear tensor, which would imply restrictions on the physical degrees of freedom. But it is still possible to diagonalize the shear, without any loss of generality, by tuning the frame rotations correctly. Using \eref{B1} the full set of equations governing the Bianchi type I system (with $\Omega_\Lambda=0$) can be shown to become
\begin{eqnarray}
\label{FinalEqs1}
&\Sigma_+'=-\left(2-q\right)\Sigma_++\abs{\mathbf{V}_c}^2-2V_1^2,\\\label{FinalEqs2}
&\Sigma_-'=-\left(2-q\right)\Sigma_-+\sqrt{3}\Re\{\mathbf{V}_c^2\},\\\label{FinalEqs3}
&V_1'=-(-2\Sigma_+-q)V_1+\Im\{\mathbf{R}_c\mathbf{V}_c^*\},\\\label{FinalEqs4}
&\mathbf{V}_c'=-(\Sigma_+-q+iR_1)\mathbf{V}_c-\sqrt{3}\Sigma_-\mathbf{V}_c^*+i\mathbf{R}_cV_1,\\\label{FinalEqs5}
&\Theta'=-(2-q)\Theta,\\\label{FinalEqs6}
&\Omega_{\rm pf}'=2\left(q+1-\frac{3}{2}\gamma \right)\Omega_{\rm pf},
\end{eqnarray}
alongside the following constraints and frame rotation specifications:
\begin{eqnarray}
&1=\Sigma_{+}^2+\Sigma_-^2+\Omega_{\rm pf}+\Theta^2+V_1^2+\abs{\mathbf{V}_c}^2\label{B1Ham},\\
&\,\Theta V_1=0\label{B1C2},\\
&\,\Theta\mathbf{V}_c=0\label{B1C3},\\
&2\sqrt{3}V_1\mathbf{V}_c=i\sqrt{3}\Sigma_+\mathbf{R}_c-i\Re\{\mathbf\Sigma_\Delta\}\mathbf{R}_c^*\label{frameDef1},\\
&\Im\{\mathbf{V}_c^2\}=2\Re\{\mathbf\Sigma_\Delta\}R_1\label{frameDef2}.
\end{eqnarray}
The two last equations, which follow from the off-diagonal components of the shear propagation equation, specify the frame. 

\section{Full set of scalar equations for Bianchi 1 in G$_2$ frame}
\label{App:G2B1}
In Bianchi type I the evolution equations reduce to the following form in a frame where a G$_2$ subgroup is chosen orthogonal to $\mathbf{V}$, which is aligned along $\mathbf{e}_1$. Upon fixing the remaining gauge freedom associated with rotation about $\mathbf e_1$, as specified in the main text, equations (\ref{B1_G2:first})-(\ref{B1_G2:last}) are reproduced. By assumption, $V_1$ is non-zero and the constraint $\Theta V_1=0$ implies that the temporal component of the $j$-form is zero. 
\begin{eqnarray}
\label{B1_G2}
&\Sigma_+'=(q-2)\Sigma_+ + 3(\Sigma_2^2+\Sigma_3^2)-2V_1^2\\
&\Sigma_-'=(q-2)\Sigma_- + 2 R_1 \Sigma_\times + \sqrt{3}(\Sigma_2^2-\Sigma_3^2)\\
&\Sigma_\times'=(q-2)\Sigma_\times - 2 R_1 \Sigma_- + 2\sqrt{3} \Sigma_2 \Sigma_3\\
&\Sigma_2'=\left(q-2-3\Sigma_+-\sqrt{3}\Sigma_-\right)\Sigma_2-(\sqrt{3}\Sigma_\times-R_1)\Sigma_3 \label{heihei}\\
&\Sigma_3'=\left(q-2-3\Sigma_++\sqrt{3}\Sigma_-\right)\Sigma_3-(\sqrt{3}\Sigma_\times+R_1)\Sigma_2 \label{heihei2}\\
&V_1'=\left(q+2\Sigma_+\right)V_1\\
&\Omega_{\rm pf}'=2\left(q+1-\frac{3}{2}\gamma \right)\Omega_{\rm pf}
\end{eqnarray}

\section{Eigenvalues}
\label{App:EigVal}

\begin{table}[H]
	\centering
	\resizebox{\textwidth}{!}{\begin{tabular}{lll}
			\toprule
			\multicolumn{3}{c}{\textbf{Eigenvalues of equilibrium sets in Bianchi type I with pure temporal field} ($V_1=\mathbf{V}_c=0$)} \\
			\hline
			Name (abbr.) &$\gamma$ &eigenvalues $\{l_1,l_2,l_3\}$\\
			\hline
			flat FLRW &$[0,2)$&$\{\,\frac{3}{2}(\gamma-2),\,\frac{3}{2}(\gamma-2),\,\frac{3}{2}(\gamma-2)\}$ \Tstrut \\
			Kasner (K.) &$[0,2)$&$\{\,3(2-\gamma),0,0\}$\\
			Jacobs' Ext. D. (J.E.D.) &$[0,2)$&$\{\,3(2-\gamma),0,0\}$
	\end{tabular}}
	\caption{Table of eigenvalues of equilibrium points in Bianchi type I with timelike field ($V=\mathbf{V}_c=0$). The independent variables used in the linearization are $\{ \Sigma_+, \Sigma_-, \Theta\}$.}
	\label{tab:EigVal_BI_temp}
\end{table}

\begin{table}[H]
	\centering
	\resizebox{\textwidth}{!}{\begin{tabular}{lll}
			\toprule
			\multicolumn{3}{c}{\textbf{Eigenvalues of equilibrium sets in Bianchi type I with pure spatial field} ($\Theta=0$)} \\
			\hline
			Name (abbr.) &$\gamma$ &eigenvalues $\{l_1,l_2,l_3,l_4,l_5\}$\\
			\hline
			Wonderland (W.)&$(\frac{2}{3},2)$&$\{\frac{3}{4}\left(\gamma-2-\Gamma(\gamma,0)\right),\,\frac{3}{2}(\gamma-2),\,\frac{3}{2}(\gamma-2),\,\frac{15}{4}(\gamma-\frac{6}{5}),\,\frac{3}{4}\left(\gamma-2+\Gamma(\gamma,0)\right)\}$ \Tstrut \\
			The Rope (R.)&$(\frac{6}{5},\frac{4}{3})$&$\{9(\gamma-\frac{4}{3}),\,\frac{3}{4}\left(-2+\gamma+\sqrt{A+2\sqrt{B}}\right),\,\frac{3}{4}\left(-2+\gamma+\sqrt{A-2\sqrt{B}}\right),\,\frac{3}{4}\left(-2+\gamma-\sqrt{A+2\sqrt{B}}\right),\,\frac{3}{4}\left(-2+\gamma-\sqrt{A-2\sqrt{B}}\right)\}$\\
			The Edge (E.)&$[0,2]$& $\{\frac{1}{2} \left(-1+i \sqrt{23}\right),\,\frac{1}{2} \left(-1- i \sqrt{23}\right),\,-1,\,0,\,3(\frac{4}{3}-\gamma\}$\\
			Flat FLRW &$[0,2)$&$\{\frac{3}{2}(\gamma-2),\,\frac{3}{2}(\gamma-2),\,\frac{3}{2}(\gamma-2),\,\frac{3}{2}(\gamma-2),\,\frac{3}{2}(\gamma-\frac{2}{3})\}$\\
			Kasner (K.) &$[0,2)$&$\{0,\,3(2-\gamma),\,-2 \sqrt{3} \Sigma_-,\,\sqrt{3} \Sigma_- -3 \Sigma_+,\, 2(1+\Sigma_+)\}$ \\
	\end{tabular}}
	\caption{Table of eigenvalues of equilibrium points in Bianchi type I with spacelike field ($\Theta=0$). The independent variables used in the linearization are $\{ \Sigma_+, \Sigma_-,  \Sigma_\times,\Sigma_3,V_1 \}$}
	\label{tab:EigVal_BI_spatial}
\end{table}
Above $A(\gamma)$ and $B(\gamma)$ are defined such that
\begin{eqnarray}
&A(\gamma)=(2-\gamma) (\gamma (18 \gamma-97)+90)\\
&B(\gamma)=(\gamma-2)^2 (3\gamma ( \gamma-\frac{4}{3}) (27 \gamma (\gamma+4)-136)+16)
\end{eqnarray}

\begin{table}[H]
	\centering
	\resizebox{\textwidth}{!}{\begin{tabular}{clclr}
			\toprule
			\multicolumn{4}{c}{\textbf{Eigenvalues of equilibrium sets in Bianchi type V}} \\
			\hline
			Type&Name (abbr.) &$\gamma$ &eigenvalues $\{l_1,l_2,l_3,l_4\}$ \\
			\hline
			V&Plane wave (P.W.) &$[0,2]$&$\{-2(1+\Sigma_+),-2(1+\Sigma_+),0,2-3\gamma-4\Sigma_+\}$\\
			V&Wonderland (W.) &$(\frac{2}{3},2)$&$\{\frac{3}{4}\left(\gamma-2-\Gamma(\gamma,\lambda)\right),\frac{3}{4}\left(\gamma-2+\Gamma(\gamma,\lambda)\right),0,\frac{3}{2}(\gamma-2)\}$\\
			V&open FLRW &$\frac{2}{3}$&$\{-2,-2,0,0\}$\\
			I&flat FLRW&$[0,2)$&$\left\{\frac{3}{2}(\gamma-2),\frac{3}{2}(\gamma-2),\frac{3}{2}\left(\gamma-\frac{2}{3}\right),\frac{3}{2}\left(\gamma-\frac{2}{3}\right)\right\}$\\
			I&Jacobs' Ext. D. (J.E.D.)&$[0,2)$&$\left\{3(2-\gamma), 2 (1+\Sigma_+ ), 0, 0 \right\}$\\
			I&Jacobs' Sphere (J.S.) &$2$&$\{2(1+\Sigma_+), 0,0,0 \}$\\
	\end{tabular}}
	\caption{Table of eigenvalues for equilibrium points in Bianchi type V. The independent variables used in the linearization are $\{A, \Sigma_+, \Sigma_-,  V_1 \}$. For J.E.D. and J.S. the reduced system \eref{BVRed}-\eref{BVRedlast}, with $\{\Sigma_-, \nu, \eta, \alpha\}$ as independent variables, has been used. }
	\label{tab:EigVal:V}
\end{table}

In the above 
\begin{equation}
\Gamma (\gamma,\lambda)=\sqrt{6(2-\gamma)\left((\gamma-2)(\gamma-\frac{5}{6})+\frac{16}{3}(\gamma-\frac{2}{3})\lambda^2\right)}
\end{equation}

\section{Monotone functions}\label{App:Zs}
The dynamical systems of Bianchi type I and V possess several monotone functions which can be found by using the same techniques as in \cite{Hervik:2010uh}. The exact solutions from which they are constructed are given, in addition to the range of $\gamma$ which the functions are monotone. 
\subsection{Monotone functions for Bianchi type I} 
In the following, the function $\phi$ is constructed the following way: $\phi=1-(\Sigma_+)_0\Sigma_+-(\Sigma_-)_0\Sigma_-$, where $(\Sigma_+)_0$ and $(\Sigma_-)_0$ mean the constant values corresponding to the equilibrium point. Now note that, using the scalar product:
\[ \phi=1-((\Sigma_+)_0,(\Sigma_-)_0)\cdot(\Sigma_{+},\Sigma_{-})\geq 1-\sqrt{(\Sigma_+)_0^2+(\Sigma_+)_0^2}\sqrt{\Sigma_{+}^2+\Sigma_{-}^2}\geq 0,\]
where equality only holds for both $((\Sigma_+)_0,(\Sigma_-)_0)$ and  $(\Sigma_{+},\Sigma_{-})$ on the Kasner circle. Since none of the equilibrium points below are on the Kasner circle, $\phi\geq \phi_{{\rm Min}}>0$.

\paragraph{General, $2/3\leq \gamma\leq 2$:}
\begin{eqnarray}
Z_1&=&\frac{\Sigma_\times\Sigma_3^2V_1^3}{\Omega^3_{\rm pf}}, \qquad Z_1'=3(3\gamma-4)Z_1, 
\label{z1t} 
\eeq
This function is increasing for $\gamma>4/3$ and decreasing for $\gamma<4/3$. 
\paragraph{Wonderland, $2/3\leq \gamma\leq 6/5$:}
\beq
Z_2&=&\frac{V_1^{2m}\Omega_{\rm pf}}{(1+m\Sigma_+)^{2(1+m)}}, \qquad m=\frac 14(3\gamma-2)\\
\frac{Z_2'}{Z_2}&=&\frac{1}{1+m\Sigma_+}\left\{4(\Sigma_++m)^2+\frac{3(3\gamma+2)}{8}\left[2(2-\gamma)(\Sigma_-^2+\Sigma_{\times}^2)+(6-5\gamma)\Sigma_3^2\right]\right\} \nonumber 
\eeq
\paragraph{The Rope, $6/5\leq \gamma\leq 4/3$:} 
\beq
Z_3 &=&\frac{\Sigma_3^{2(5\gamma-6)}V_1^{2(9\gamma-10)}\Omega_{\rm pf}^b}{\phi^a}, 
\eeq
where
\beq
\phi& =&1+\frac{(3\gamma-2)}{4}\Sigma_++\frac{\sqrt{3}(5\gamma-6)}{4}\Sigma_-, \\ a&=&\frac{4(16\gamma-7\gamma^2-8)}{(2-\gamma)}, ~~b=\frac{4(4-3\gamma)}{(2-\gamma)}. \\
\frac{Z_3'}{Z_3}&=&\phi^{-1}\Bigg\{8\left[\frac{(3\gamma-2)}{4}X+\frac{\sqrt{3}(5\gamma-6)}{4}Y\right]^2
\\ && +6(16\gamma-7\gamma^2-8)(X^2+Y^2) +3a(4-3\gamma)\Sigma_{\times}^2\Bigg\},
\\ &&X=\Sigma_++\frac{(3\gamma-2)}{4}, \qquad Y=\Sigma_-+\frac{\sqrt{3}(5\gamma-6)}{4}.
\eeq
\paragraph{The Edge, $4/3\leq \gamma\leq 2$:} 
\beq
Z_4&=&\frac{\Sigma_3^2V_1^6}{\phi^8}, \qquad \phi=1+\frac 12\Sigma_++\frac{1}{2\sqrt{3}}\Sigma_-, \\
\frac{Z_4'}{Z_4}&=&\phi^{-1}\left[8(X^2+Y^2)+(\sqrt{3}X+Y)^2+4(3\gamma-4)\Omega_{\rm pf}\right], \\
&& \quad X=\Sigma_++\frac 12, \quad Y=\Sigma_-+\frac{1}{2\sqrt{3}}. 
\end{eqnarray}
\subsection{Monotone functions for Bianchi type V}
This section refers to the reduced system, and all functions are monotone for a fixed value of $\alpha$. 
 \paragraph{General, $2/3\leq \gamma\leq 2$:}
\beq
Z_5=\Sigma_-, \qquad Z_5'=-(2-q)Z_5.
\eeq
\paragraph{Wonderland, $\sin^2\alpha\geq\frac14(3\gamma-2)\geq 0$:}
\beq
Z_6&=& \frac{\eta^b\Omega_{\rm pf}^c}{(1+n\nu)^a}, \quad n=\frac{3\gamma-2}{4\sin^2\alpha}\\
&& a=\frac 18\left[{16\sin^2\alpha-(3\gamma-2)^2}\right], \quad b=\frac 38(3\gamma-2)(2-\gamma),\nonumber \\ 
&& c=\sin^2\alpha-\frac14(3\gamma-2), \nonumber \\
\frac{Z_6'}{Z_6}&=&\frac{2-\gamma}{1+n\nu}\left[3(\nu+n)^2+\frac32a\Sigma_-^2\right] 
\eeq
\paragraph{Plane waves, $\sin^2\alpha\leq\frac14(3\gamma-2)$:}
\beq
Z_7&=& \frac{\eta^2}{(1+\nu\sin\alpha)^2}, \\
\frac{Z_7'}{Z_7}&=&\frac{4}{1+\nu\sin\alpha}\left[ (\nu+\sin\alpha)^2+\cos^2\alpha\Sigma_-^2+\left(\frac{3\gamma-2}{4}-\sin^2\alpha\right)\Omega_{\rm pf}\right]
\eeq

\newpage
\section*{References}


\end{document}